\newcommand{\edit}[1]{{#1}} 
\newcommand{\editt}[1]{{#1}} 
\newtheorem{thm}{Theorem}
\newtheorem{prop}{Proposition}
\newtheorem{lemma}{Lemma}
\newtheorem{definition}{Definition}
\newtheorem{remark}{Remark}
\newcommand{\W}{\omega}
\newcommand{\bitem}{\begin{itemize}}
\newcommand{\eitem}{\end{itemize}}
\newcommand{\supp}{\mathrm{supp}}
\newcommand{\beqn}{\begin{equation}}
\newcommand{\eeqn}{\end{equation}}
\newcommand{\balign}{\begin{align}}
\newcommand{\ealign}{\end{align}}
\def\w{{\mathbf w}}
\def \R {\mathbb{R}}
\begin{document}

\title{Weighted $\ell_1$-Minimization for Sparse Recovery under Arbitrary Prior Information}

\author{Deanna Needell\thanks{D.~Needell is with the Department of Mathematical Sciences, Claremont McKenna College, Claremont, CA, 91711, USA (email: dneedell@cmc.edu).},  
Rayan Saab\thanks{R.~Saab is with the Department of Mathematics, University of California, San Diego, La Jolla, CA, 92093, USA (email: rsaab@ucsd.edu).},  and  
Tina Woolf\thanks{T.~Woolf is with the Institute of Mathematical Sciences, Claremont Graduate University, Claremont, CA, 91711, USA (email: tina.woolf@cgu.edu).}}

\maketitle

\begin{abstract}
{Weighted $\ell_1$-minimization has been studied as a technique for the reconstruction of a sparse signal from compressively sampled measurements when prior information about the signal, in the form of a support estimate, is available. In this work, we study the recovery conditions and the associated recovery guarantees of weighted $\ell_1$-minimization when arbitrarily many distinct weights are permitted. For example, such a setup might be used when one has multiple estimates for the support of a signal, and these estimates have varying degrees of accuracy. Our analysis yields an extension to existing works that assume only a single \editt{support estimate set upon which a constant weight is applied}. We include numerical experiments,  with both synthetic signals and real video data, that demonstrate the benefits of allowing non-uniform weights in the reconstruction procedure.}
\end{abstract}

\begin{center}
\bf{Index Terms}\\
\end{center}
\vspace{-2mm}
\hspace{9mm}  Compressed sensing, weighted $\ell_1$-minimization, restricted isometry property

\section{Introduction}
Compressed sensing is a recently developed paradigm for the effective acquisition of sparse signals via few nonadaptive, linear measurements (e.g., see \cite{Donoh_Compressed,CandeRT_Stable,CandeRT_Robust}). Specifically, we acquire
\begin{align}\label{sensing model}
y = Ax + z,
\end{align}
where $x\in\mathbb{R}^n$ is an unknown $s$-sparse signal (that is, $\|x \|_0 =|\{i: x_i \neq 0 \}| \leq s$ such that $s\ll n$), $A$ is a known $m\times n$ measurement matrix, and $z\in\mathbb{R}^m$ is measurement noise where $\|z \|_2 \leq \epsilon$. Given knowledge of the noisy observations $y\in\mathbb{R}^m$ and the matrix $A$, we are interested in obtaining an estimate of $x$ in the case where $m$ is smaller than $n$. A common approach to this task is solving the convex optimization program
\begin{align}\label{L1 min}
\min_{\tilde{x}} \|\tilde{x} \|_1 \mbox{ subject to } \|A\tilde{x}-y\|_2 \leq \epsilon,
\end{align}
where $\|\tilde{x} \|_1 = \sum_{i=1}^n |x_i|$ in the objective promotes sparsity and the constraint ensures fidelity to the model. The problem (\ref{L1 min}) can be recast as a linear program and can thus be solved efficiently in polynomial-time complexity \cite{BoydV_Convex}. It was shown in \cite{CandeRT_Stable} that if $A$ satisfies a certain restricted isometry property (which holds for certain random matrices with high probability whenever $m \gtrsim s\log(n/s)$) then the $\ell_1$-minimization in (\ref{L1 min}) can stably and robustly recover $x$ from the potentially noisy measurements $y$. 

In using (\ref{L1 min}) for reconstructing $x$, all the indices $i \in \{1,...,n\}$ are treated equally. That is,  \eqref{L1 min} does not make use of any information or assumptions on the support of $x$. 
In many applications, however, additional prior knowledge about the signal support may be available. For instance, in the acquisition of video or audio signals there is often high correlation from one frame to the next, suggesting that the information learned in the previous frames should be exploited in the acquisition of subsequent frames. In applications dealing with natural images, the signal may often be expected to follow a certain structured sparsity model which can be utilized by the recovery algorithm (e.g., \cite{BaranCDH_Model}). A \textit{weighted} version of the $\ell_1$-minimization in (\ref{L1 min}) has been studied to exploit prior information during the signal reconstruction (see Section \ref{sec::prior info} for a discussion of prior work). More precisely, suppose $\widetilde{T}\subset \{1,\dots,n \}$ is a prior support estimate. Then the weighted $\ell_1$-minimization problem can be formulated as
\begin{equation} \label{eqn::WL1}
\min_{\tilde{x}} \|\tilde{x}\|_{1,\overrightarrow{\w}} \mbox{ subject to } \|A\tilde{x}-y\|_2 \leq \epsilon, \mbox{ where } \|x\|_{1,\overrightarrow{\w}} = \sum_{i=1}^n \w_i|x_i| \mbox{ and } \w_i = 
  \begin{cases} 
      w\in[0,1] & i\in\widetilde{T} \\
      1 & i\in\widetilde{T}^c
   \end{cases}
   .
\end{equation}
Here, $\overrightarrow{\w} = [\w_1, \w_2,\dots, \w_n]$ is the vector of weights (with the only distinct weights being either $w$ or 1). Of course, (\ref{eqn::WL1}) reduces to (\ref{L1 min}) when $w=1$. The intuition behind this formulation is that selecting a weight $w$ less than 1 will encourage nonzero entries on $\widetilde{T}$ in the minimizer of (\ref{eqn::WL1}). If $\widetilde{T}$ is an accurate estimate of the true support of $x$, this weighted procedure has been shown to outperform (\ref{L1 min}), see e.g. \cite{SaabM_weighted} and references therein. 

A limitation of (\ref{eqn::WL1}) and much of the corresponding theory is that the weight $w$ is the same on the entire set $\widetilde{T}$ \editt{(as such, throughout we will refer to (\ref{eqn::WL1}) as the weighted $\ell_1$-minimization formulation with a single weight, where the weight is $w$)}. It is plausible that a practitioner may not have the same level of confidence on the entire set $\widetilde{T}$ depending on the type of prior information available, which suggests that allowing distinct weights on different pieces of $\widetilde{T}$ may be desirable.  In other settings, a statistical prior on the signal may be known, providing probabilities on each entry being in the signal support; this information should also be leveraged by using non-uniform weights.  Although the formulation and implementation of (\ref{eqn::WL1}) can be easily modified to capture this feature, the modification to the theoretical analysis is less straight-forward. In this work, we provide a generalized theory studying the recovery conditions and  error guarantees associated with weighted $\ell_1$-minimization, when \textit{arbitrary} weight assignments are permitted.  

\subsection{Prior Work}\label{sec::prior info}
Compressed sensing in the presence of prior information has previously been studied under various models. For example, the paper \cite{ChenTL_PICCS08}, which we believe to be the first of its kind, considers prior information in the form of a similar signal known beforehand. The authors propose solving a minimization problem similar to (\ref{L1 min}), but where the function being minimized includes two terms: one for measuring the sparsity of the recovered signal and the other for measuring the sparsity of the difference between the recovered signal and the prior known signal. Along the same line, the authors of \cite{MotaDR14} study a modification of the $\ell_1$-minimization (\ref{L1 min}) by minimizing either $\|\tilde{x} \|_1 + \beta\|\tilde{x}-x' \|_1$ or $\|\tilde{x} \|_1 + \frac{\beta}{2}\|\tilde{x}-x' \|_2^2$, where $x'$ denotes the similar, previously known signal, and $\beta>0$ is a parameter that establishes the tradeoff between the signal sparsity and the fidelity to the prior signal $x'$. Similarly, in the context of longitudinal Magnetic Resonance Imaging (MRI) \cite{Eldar_MRI} assumes knowledge of a previous MRI scan. Their proposed minimization is similar to that studied in \cite{ChenTL_PICCS08} and \cite{MotaDR14}, but the authors also introduce weights in both terms of the objective function. Their final proposed scheme (which is not studied analytically) solves the optimization iteratively, with the possibility of adaptively selecting measurements at each iteration. 

Another common type of prior information studied is in the form of a support estimate. The paper \cite{VaswaL_Modified} assumes a support estimate $\widetilde{T}$, and then minimizes $\|\tilde{x}_{\widetilde{T}^c} \|_1$, where $\tilde{x}_{\widetilde{T}^c}$ denotes the signal $\tilde{x}$ restricted to the indices on the complement of $\widetilde{T}$ with zeros elsewhere, which is equivalent to (\ref{eqn::WL1}) with $w=0$. In the noise-free setting, the authors show that when $\widetilde{T}$ is a reasonable estimate of the true support, the recovery conditions are milder than those needed for classical compressed sensing. The work \cite{VonBorriesMP_priorInfo07} considers prior information on the support of the discrete Fourier transform of the signal. Using a similar method as in \cite{VaswaL_Modified} with noise-free measurements, the authors show empirically that a reduced number of measurements is required for recovery. Assuming partially known support information, the authors of \cite{LiangY_dynamic10} also propose a modification to $\ell_1$-minimization similar to \cite{VaswaL_Modified} and provide experimental results in the application of dynamic MRI. The authors of \cite{Saab_weightedL1RIP} analyze the weighted $\ell_1$-minimization problem (\ref{eqn::WL1}) under a \textit{restricted isometry property} on the sensing matrix (see Section \ref{sec::existing results}), thus generalizing the results of \cite{CandeRT_Stable} to the weighted case. The authors show that if at least 50\% of the estimated support $\widetilde{T}$ is correct, then weighted $\ell_1$-minimization is stable and robust under weaker sufficient conditions than those for standard $\ell_1$-minimization. As an extension to \cite{Saab_weightedL1RIP}, an analysis of weighted $\ell_1$-minimization with multiple support estimate sets with distinct weights, which is the focus of this paper, is provided in \cite{MansourY_MultiWeightedL1}; however, we will demonstrate in Section \ref{sec::weighted l1 theory} that the main result of \cite{MansourY_MultiWeightedL1} only provides a sub-optimal generalization to \cite{Saab_weightedL1RIP}, which we remedy here. The analysis in \cite{SaabM_weighted} studies the noise-free weighted $\ell_1$-minimization problem in the presence of a support estimate $\widetilde{T}$, but, in contrast to \cite{Saab_weightedL1RIP}, the sensing matrix is assumed to possess a \textit{null space property} (see \cite{SaabM_weighted} for details). Exact recovery conditions of the noise-free version of (\ref{eqn::WL1}) under a null space property and the restricted isometry property are also studied in \cite{ZhouXWK_weighted13}. 
\edit{The work \cite{BahW_weighted15} studies the minimal number of Gaussian measurements required to achieve robust recovery via weighted $\ell_1$-minimization using the tools of weighted sparsity and weighted null space property.} 
In a separate but related direction, modifications of \textit{greedy} algorithms for compressed sensing have also been studied under the assumption of a partially known support. The work \cite{Carillo_partiallyKnownIHT} proposes a modification of the IHT \cite{BlumeD_Iterative} algorithm to incorporate partially known support information, with a theoretical bound on the reconstruction error provided. Similarly, \cite{Carillo_partiallyKnown} proposes a modification of the OMP \cite{TroppG_Signal}, CoSaMP \cite{NeedeT_CoSaMP}, and re-weighted least squares \cite{CarilloB_reweightedLS09} algorithms, and \cite{NorthOneBit15} proposes a modification of the BIHT \cite{JacquLBB_Robust} algorithm for one-bit compressed sensing to incorporate partially known support information.

As an alternative to prior support information, the papers \cite{KhajXAH_weightedl1_prior,KhajXAH_weighted11,KrishOH_Simpler12} assume a non-uniform sparsity model and analyze the noise-free weighted $\ell_1$-minimization while allowing for non-uniform weights. Specifically, the authors consider a model where the entries of the unknown signal fall into two (or more) sets, each with a different probability of being nonzero. Indices within the same set would then get assigned the same weight. The study of this type of model is further generalized in \cite{MisraP_nonuniform15}. The prior information studied in \cite{ScarlettED_priorInfo13} is also in the form of probabilities that each entry of the signal is nonzero (i.e., a prior distribution). They study information-theoretic limits on the number of (noisy) measurements needed to recover the support set exactly, and show that significantly fewer measurements can be used if the prior distribution is sufficiently non-uniform. 

Other relevant works include \cite{KhajXAH_weighted11,MisraP_nonuniform15,DiazJRM_prior16}, which propose methods for determining \edit{good} weights.  We also mention that perhaps the first study of a weighted $\ell_1$-minimization approach was in \cite{CandeWB_Enhancing}; the algorithm proposed there does not assume any prior information, but consists of solving a sequence of weighted $\ell_1$-minimization problems where the weights for the next iteration are determined from the solution of the previous iteration.  Precise recovery guarantees for this scheme have remained elusive, but it is possible 
the analysis we present here may lend further insight into this related approach.

\subsection{Contribution}
We study the weighted $\ell_1$-minimization problem (\ref{eqn::WL1}) in its full generality. 
\editt{Under a restricted isometry property on the sensing matrix,} we derive stability and robustness guarantees for weighted $\ell_1$-minimization \editt{with completely arbitrary weights} that generalize the results of \cite{Saab_weightedL1RIP}, further generalize the results of \cite{CandeRT_Stable}, and improve upon the results of \cite{MansourY_MultiWeightedL1}, thus providing an extension to the existing literature. 
Our main technical result is Theorem \ref{thm::WL1 Nsupp}, and in the ensuing discussion in Section \ref{sec::weighted l1 theory}, we compare the theoretical results associated with using a single weight to the general ones derived in this paper. We highlight scenarios under which the sufficient conditions associated with our ``multi-weight" scenario are weaker than those associated with applying a single weight to the support estimate in \eqref{eqn::WL1}, suggesting a practical benefit to the use of multiple weights. 
Indeed, we demonstrate using extensive numerical experiments that allowing arbitrary weights can often outperform weighted $\ell_1$-minimization with a constant weight. 

\subsection{Organization}
The remainder of this paper is organized as follows. In Section \ref{sec::existing results} we recall the results on weighted $\ell_1$-minimization (\ref{eqn::WL1}) when a constant weight $w$ is used for signal reconstruction in the presence of a support estimate $\widetilde{T}$, which reduce to the existing results for compressed sensing when $\ell_1$-minimization (\ref{L1 min}) is used for signal reconstruction without any prior information. In Section \ref{sec::weighted l1 theory} we present our main theoretical results along with their proofs, thus providing a generalized and improved theory of weighted $\ell_1$-minimization when non-uniform weights are permitted.
Numerical experiments involving synthetic and real signals are provided in Section \ref{sec::weighted experiments}. We conclude with a brief discussion in Section \ref{sec::discussion}.

\section{Existing Results for $\ell_1$-Minimization and Weighted $\ell_1$-Minimization with a Single Weight}\label{sec::existing results}

As mentioned earlier, in \cite{CandeRT_Stable} it was shown that the $\ell_1$-minimization problem (\ref{L1 min}), which utilizes no prior information about the signal, can stably and robustly recover $x$ from the noisy measurements $y=Ax+z$ as long as $A$ satisfies a particular property. The now well-known condition required on $A$ is termed the \textit{restricted isometry property} (RIP), and is defined in Definition \ref{def::rip} below. 

\begin{definition}\label{def::rip}
An $m\times n$ matrix $A$ is said to possess the RIP with $s$-\textit{restricted isometry constant} $\delta_s<1$ if $\delta_s$ is the smallest positive number such that
\begin{align}
(1-\delta_s)\|x\|_2^2 \leq \|Ax \|_2^2 \leq (1+\delta_s)\|x \|_2^2
\end{align}
holds for all $s$-sparse vectors $x$. 
\end{definition}
\noindent Matrices constructed with independent and identically distributed standard Gaussian entries or rows subsampled from the discrete Fourier transform matrix are examples of matrices known to satisfy the RIP when $m \gtrsim s\log^a(n/s)$, for some constant $a\geq 1$ \cite{CandeT_Near,RudelV_Sparse}.

The main result of \cite{Saab_weightedL1RIP} generalizes the recovery condition from \cite{CandeRT_Stable} to the weighted $\ell_1$-minimization problem (\ref{eqn::WL1}) where the constant weight $w$ is applied on all of $\widetilde{T}$. Theorem \ref{thm::WL1} below states the main result of \cite{Saab_weightedL1RIP}, which reduces to the result from \cite{CandeRT_Stable} when $w=1$, or when $\widetilde{T}$ is empty (see \cite{Saab_weightedL1RIP} for details).

\begin{thm}\emph{(Friedlander et al. \cite{Saab_weightedL1RIP})}\label{thm::WL1}
Let $x\in\R^n$ and let $x_s$ denote its best $s$-term approximation, supported on $T_0$. Let $\widetilde{T}\subset \{1,\dots,n\}$  be an arbitrary set and define $\rho$ and $\alpha$ such that $|\widetilde{T}| = \rho s$ and $\frac{|\widetilde{T}\cap T_0|}{|\widetilde{T}|} = \alpha$. Suppose that there exists an $a\in \frac{1}{s}\mathbb{Z}$ with $a\geq (1-\alpha)\rho$, $a>1$, and the measurement matrix $A$ has RIP with
\begin{align}\label{WL1 RIP condition}
\delta_{as} + \frac{a}{b^2}\delta_{(a+1)s} < \frac{a}{b^2} - 1
\end{align}
where
\begin{align}\label{b for 1 support}
b = b(w,\rho,\alpha) :=  w + (1-w)\sqrt{1+\rho-2\alpha\rho}
\end{align}
for some given $0\leq w \leq 1$. Then the solution $\hat{x}$ to (\ref{eqn::WL1}) obeys
\begin{align}
\|\hat{x} - x \|_2 \leq C_0\epsilon + C_1s^{-1/2}\left(w\|x-x_s \|_1 + (1-w)\|x_{\widetilde{T}^c\cap T_0^c} \|_1 \right),
\end{align}
where $C_0$ and $C_1$ are well-behaved constants that depend on the measurement matrix $A$, the weight $w$, and the parameters $\rho$ and $\alpha$.
\end{thm}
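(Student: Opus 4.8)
The plan is to adapt the RIP-based argument of Candès \cite{CandeRT_Stable} to the weighted objective, tracking precisely where the weights and the estimate $\widetilde{T}$ enter and how the constant $b$ of \eqref{b for 1 support} is manufactured. Write $h := \hat{x} - x$. Since $\hat{x}$ is feasible for \eqref{eqn::WL1} and $\|Ax - y\|_2 = \|z\|_2 \le \epsilon$, the triangle inequality gives $\|Ah\|_2 \le 2\epsilon$. I will bound $\|h\|_2 \le \|h_{T_{01}}\|_2 + \|h_{T_{01}^c}\|_2$, where $T_{01} = T_0 \cup T_1$ is a distinguished set of size $(a+1)s$ chosen below. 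The first ingredient is the weighted cone constraint: optimality $\|\hat{x}\|_{1,\overrightarrow{\w}} \le \|x\|_{1,\overrightarrow{\w}}$, split over $T_0$ and $T_0^c$ and combined with the reverse triangle inequality, yields $\|h_{T_0^c}\|_{1,\overrightarrow{\w}} \le \|h_{T_0}\|_{1,\overrightarrow{\w}} + 2\|x_{T_0^c}\|_{1,\overrightarrow{\w}}$; since the weights are $w$ on $\widetilde{T}$ and $1$ off it, $\|x_{T_0^c}\|_{1,\overrightarrow{\w}} = w\|x-x_s\|_1 + (1-w)\|x_{\widetilde{T}^c \cap T_0^c}\|_1 =: e_0$, exactly the tail quantity in the asserted bound.

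Next I would set up the block decomposition, which is where the support estimate enters. Because $|\widetilde{T} \setminus T_0| = (1-\alpha)\rho s \le as$ under the hypothesis $a \ge (1-\alpha)\rho$, I can take the first block $T_1$ of size $as$ so that $\widetilde{T}\setminus T_0 \subseteq T_1$; then $T_0 \cup \widetilde{T} \subseteq T_{01}$ and $|T_{01}| = (a+1)s$. Sorting the remaining coordinates of $T_0^c$ by decreasing magnitude into blocks $T_2, T_3, \dots$ of size $as$, the usual estimate gives $\|h_{T_{01}^c}\|_2 \le \sum_{j \ge 2}\|h_{T_j}\|_2 \le (as)^{-1/2}\|h_{T_0^c}\|_1$. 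Unwinding the weights through the cone constraint gives $\|h_{T_0^c}\|_1 \le Q + 2e_0$, where $Q := \|h_{T_0}\|_{1,\overrightarrow{\w}} + (1-w)\|h_{\widetilde{T}\setminus T_0}\|_1 = w\|h_{T_0}\|_1 + (1-w)\|h_{T_0 \triangle \widetilde{T}}\|_1$. Two applications of Cauchy--Schwarz, over $T_0$ (size $s$) and over $T_0 \triangle \widetilde{T}$ (size $(1+\rho-2\alpha\rho)s$), both contained in $T_{01}$, then give $Q \le b\sqrt{s}\,\|h_{T_{01}}\|_2$ with precisely $b = w + (1-w)\sqrt{1+\rho-2\alpha\rho}$. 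This is the step that produces $b$ and explains its symmetric-difference form.

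Finally I would invoke the RIP. The lower bound gives $\sqrt{1-\delta_{(a+1)s}}\,\|h_{T_{01}}\|_2 \le \|Ah_{T_{01}}\|_2$; expanding $\|Ah_{T_{01}}\|_2^2 = \langle Ah_{T_{01}}, Ah\rangle - \sum_{j \ge 2}\langle Ah_{T_{01}}, Ah_{T_j}\rangle$, applying Cauchy--Schwarz to each inner product, and using $\|Ah\|_2 \le 2\epsilon$ together with the block energy bound $\|Ah_{T_j}\|_2 \le \sqrt{1+\delta_{as}}\,\|h_{T_j}\|_2$, I obtain $\sqrt{1-\delta_{(a+1)s}}\,\|h_{T_{01}}\|_2 \le 2\epsilon + \sqrt{1+\delta_{as}}\sum_{j\ge2}\|h_{T_j}\|_2$. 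Substituting $\sum_{j\ge2}\|h_{T_j}\|_2 \le \frac{b}{\sqrt{a}}\|h_{T_{01}}\|_2 + \frac{2e_0}{\sqrt{as}}$ and collecting terms, the coefficient of $\|h_{T_{01}}\|_2$ becomes $\sqrt{1-\delta_{(a+1)s}} - \frac{b}{\sqrt{a}}\sqrt{1+\delta_{as}}$, whose positivity is exactly equivalent to the RIP condition \eqref{WL1 RIP condition}. Solving for $\|h_{T_{01}}\|_2$ and recombining with the tail estimate yields $\|h\|_2 \le C_0\epsilon + C_1 s^{-1/2} e_0$ and identifies $C_0, C_1$.

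The main obstacle is the geometric reconciliation driving the second and third steps: the weighted cone inequality must be converted into an $\ell_2$ estimate with the sharp constant $b$, which forces the block decomposition to simultaneously respect decreasing-magnitude sorting (for the tail) and to contain $T_0 \triangle \widetilde{T}$ inside the size-$(a+1)s$ set on which the RIP acts. The forced inclusion $\widetilde{T}\setminus T_0 \subseteq T_1$ is precisely what the hypothesis $a \ge (1-\alpha)\rho$ permits (with $a > 1$ and $a \in \frac{1}{s}\mathbb{Z}$ ensuring nonempty, integer-sized blocks), and the two-set Cauchy--Schwarz yielding $b$ is the essential new element beyond the unweighted argument; the remaining RIP bookkeeping follows \cite{CandeRT_Stable}.
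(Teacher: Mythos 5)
Your overall architecture — the weighted cone constraint, the two Cauchy--Schwarz applications that manufacture $b = w + (1-w)\sqrt{1+\rho-2\alpha\rho}$ from the sets $T_0$ and $T_0\triangle\widetilde{T}$, and the RIP bookkeeping with the positivity of $\sqrt{1-\delta_{(a+1)s}} - \tfrac{b}{\sqrt{a}}\sqrt{1+\delta_{as}}$ — matches the paper's proof (the paper proves the general-$N$ Theorem~\ref{thm::WL1 Nsupp}; this theorem is its $N=1$ case). However, there is a genuine gap in your block decomposition. You force $\widetilde{T}\setminus T_0 \subseteq T_1$ and then invoke ``the usual estimate'' $\|h_{T_{01}^c}\|_2 \le (as)^{-1/2}\|h_{T_0^c}\|_1$. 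That estimate is not free: it requires $T_1$ to consist of the $as$ \emph{largest-magnitude} coordinates of $h_{T_0^c}$, so that every entry indexed by $T_2$ is dominated by the average magnitude over $T_1$, and so on down the chain. Once $T_1$ contains forced (possibly tiny or zero) entries from $\widetilde{T}\setminus T_0$, the chain breaks at $j=2$. Concretely, suppose $h$ vanishes on $T_1$ and has a single nonzero entry of magnitude $t$ elsewhere in $T_0^c$: then $\|h_{T_{01}^c}\|_2 = t$ while $(as)^{-1/2}\|h_{T_0^c}\|_1 = t/\sqrt{as} < t$, so your tail lemma is false under your construction. Since you explicitly identified the forced inclusion as the essential element reconciling the sorting with the RIP set, this is a structural flaw, not a typo.

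The paper resolves the tension in the opposite direction: it keeps $T_1$ as the $as$ largest-magnitude coordinates of $h_{T_0^c}$ (so the tail bound stands), and observes that the containment $T_0\triangle\widetilde{T}\subseteq T_{01}$ is never needed — only the norm inequality $\|h_{T_0\triangle\widetilde{T}}\|_2 \le \|h_{T_{01}}\|_2$, which follows from a dominance argument. Indeed, $\widetilde{T}\setminus T_0$ lies inside $T_0^c$ and has $(1-\alpha)\rho s \le as$ elements, and since $T_1$ carries the $as$ largest entries of $h$ on $T_0^c$, any subset of $T_0^c$ of cardinality at most $as$ satisfies $\|h_{\widetilde{T}\setminus T_0}\|_2 \le \|h_{T_1}\|_2$; hence $\|h_{T_0\triangle\widetilde{T}}\|_2^2 \le \|h_{T_0}\|_2^2 + \|h_{T_1}\|_2^2 = \|h_{T_{01}}\|_2^2$. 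This is exactly where the hypothesis $a \ge (1-\alpha)\rho$ enters. If you replace your forced-inclusion step with this dominance estimate, the remainder of your argument goes through verbatim and yields the stated constants $C_0$ and $C_1$.
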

\begin{remark}
The constants $C_0$ and $C_1$ are explicitly given by
\begin{align}\label{eqn::WL1 2supp C}
C_0 = \frac{2(1+\frac{b}{\sqrt{a}})}{\sqrt{1-\delta_{(a+1)s}} - \frac{b}{\sqrt{a}}\sqrt{1+\delta_{as}}}, \,\,\,\,
C_1 = \frac{2a^{-1/2}(\sqrt{1-\delta_{(a+1)s}}+\sqrt{1+\delta_{as}})}{\sqrt{1-\delta_{(a+1)s}} - \frac{b}{\sqrt{a}}\sqrt{1+\delta_{as}}}.
\end{align}
\end{remark}
\begin{remark}
Note that the classical result of \cite{CandeRT_Stable} for un-weighted $\ell_1$-minimization (with $w=1$ and $b=1$) is proved using the condition (\ref{WL1 RIP condition}) with $a=3$, yielding the requirement $\delta_{3s} + 3\delta_{4s} < 2$.  Thus, Theorem \ref{thm::WL1} requires a \textit{weaker} RIP assumption than the classical un-weighted case for accurate support estimates (e.g. if $\rho = 1$ and $\alpha > 1/2$).  Note also that the classical condition on the RIP constant has been improved several times \cite{Cande_Restricted,AnderssonS_unifRecovery,CaiWX_New,CaiWX_Shifting,Foucart_note,FoucartL_Sparsest,MoL_New}; although a version of Theorem \ref{thm::WL1} and the main results of this paper can likely be extended to the theory of these works, we do not pursue such refinements here.
\end{remark}
\begin{remark}
Since $\delta_{as}<\delta_{(a+1)s}$, a sufficient condition for (\ref{WL1 RIP condition}) to hold is 
\begin{align}\label{WL1 RIP condition sufficient}
\delta_{(a+1)s} < \frac{a-b^2}{a+b^2} := \delta^b.
\end{align}
\end{remark}

\section{Weighted $\ell_1$-Minimization with Non-uniform Weights}\label{sec::weighted l1 theory}
In this section, we present our main results for generalizing the weighted $\ell_1$-minimization theory of \cite{Saab_weightedL1RIP}, and improving the theory of \cite{MansourY_MultiWeightedL1}, to allow for arbitrary weight assignments. 
Our main theorem is provided is Section \ref{sec::weighted l1 N}, and Section \ref{sec::proof} details the proof of this result. 

\subsection{Weighted $\ell_1$-Minimization with $N$ Distinct Weights} \label{sec::weighted l1 N}
We consider weighted $\ell_1$-minimization with $N$ distinct weights, where $1\leq N \leq n$. To that end, suppose we have $N$ disjoint support estimates $\widetilde{T}_i\subset \{1,\dots,n\}$, $i=1,\dots,N$, where $|\widetilde{T}_i| = \rho_is$. Define the accuracy of the support estimates to be $\alpha_i = \frac{|\widetilde{T}_i\cap T_0|}{|\widetilde{T}_i|}$. Also define $\widetilde{T} = \bigcup_{i=1}^N\widetilde{T}_i \subset \{1, \ldots, n\}$. Then the general weighted $\ell_1$-minimization is formulated as

\begin{equation} \label{eqn::WL1 Nsupp}
\min_{\tilde{x}} \|\tilde{x}\|_{1,\overrightarrow{\w}} \mbox{ subject to } \|A\tilde{x}-y\|_2 \leq \epsilon, \mbox{ where } \w_k = 
  \begin{cases} 
      w_i\in[0,1] & k\in\widetilde{T}_i \\
      1 & k\in\widetilde{T}^c
   \end{cases}
   .
\end{equation}
Our main result provides recovery guarantees for (\ref{eqn::WL1 Nsupp}).  As we will discuss below, Theorem \ref{thm::WL1 Nsupp} recovers the classical un-weighted and weighted results for the single weight case. 
More importantly, in the arbitrary weight case, we show that the RIP requirements \edit{stated} here are strictly weaker than those in the classical settings when sufficiently accurate prior information is available. 

\edit{
\begin{remark}
We model the prior information with $N$ disjoint support estimates $\widetilde{T}_i$ so that, for each index (when $N=n$) or each set of indices (when $N<n$), we can apply different weights corresponding to our level of confidence that they are in the support. This framework accommodates prior information including, but not limited to, a support estimate in the traditional sense or a prior signal distribution. For example, in the event that multiple non-disjoint support estimates are available, one would simply take their intersections and set-differences to define disjoint sets and assign appropriate relative size ($\rho$) and accuracy ($\alpha$) values for these new sets. 
\end{remark}}

\begin{thm}\label{thm::WL1 Nsupp}
Let $x\in\R^n$, let $x_s$ denote its best $s$-sparse approximation, and denote the support of $x_s$ by $T_0$. Let $\widetilde{T}_i\subset \{1,\dots,n\}$ for $i=1,\dots,N$, where $1\leq N \leq n$, be arbitrary 
disjoint sets
and denote $\widetilde{T} = \bigcup_{i=1}^N \widetilde{T}_i$.  Without loss of generality, assume that the weights in \eqref{eqn::WL1 Nsupp} are ordered so that $1\geq w_1 \geq w_2 \geq \dots \geq w_N \geq 0$.  For each $i$, define the relative size $\rho_i$ and accuracy $\alpha_i$ via $|\widetilde{T}_i| = \rho_is$ and $\frac{|\widetilde{T}_i\cap T_0|}{|\widetilde{T}_i|} = \alpha_i$. Suppose that there exists $a>1$, $a\in\frac{1}{s}\mathbb{Z}$ with $\sum_{i=1}^N\rho_i(1-\alpha_i)\leq a$, and that the measurement matrix $A$ has the RIP with 
\begin{align} \label{WL1 Nsupp RIP condition}
\delta_{as} + \frac{a}{K_N^2}\delta_{(a+1)s} < \frac{a}{K_N^2} - 1,
\end{align}
where 
\begin{align}\label{K_N}
K_N &= K_N(w_1,\dots,w_N,\rho_1,\dots,\rho_N,\alpha_1,\dots,\alpha_N) \notag\\
&:= w_N + (1-w_1)\sqrt{1 + \sum_{i=1}^N (\rho_i - 2\alpha_i\rho_i)} + \sum_{j=2}^N \left((w_{j-1}-w_j)\sqrt{1 + \sum_{i=j}^N(\rho_i-2\alpha_i\rho_i)}\right).
\end{align}
Then the minimizer $\hat{x}$ to (\ref{eqn::WL1 Nsupp}) obeys
\begin{align}\label{KN bound}
\|\hat{x} - x \|_2 \leq C_0'\epsilon + C_1's^{-1/2}\left(\|x-x_s \|_1\sum_{i=1}^N w_i + (1-\sum_{i=1}^N w_i)\|x_{\widetilde{T}^c\cap T_0^c} \|_1 - \sum_{i=1}^N \sum_{j=1,j\neq i}^N w_j\|x_{\widetilde{T}_i\cap T_0^c} \|_1 \right)
\end{align}
where $C_0'$ and $C_1'$ are well-behaved constants that depend on the measurement matrix $A$, the weights $w_i$, and the parameters $\rho_i$ and $\alpha_i$ for $i=1,\dots,N$.
\end{thm}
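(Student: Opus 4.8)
The plan is to follow the standard cone-plus-RIP template used in the proof of Theorem \ref{thm::WL1}, but to carry out the cone estimate in a way that respects the multiple distinct weights. Let $h = \hat{x} - x$ denote the error vector. Since $\hat{x}$ is the minimizer of the weighted objective and $x$ is feasible (by the noise bound $\|Ax - y\|_2 = \|z\|_2 \leq \epsilon$), I would first extract the optimality inequality $\|\hat{x}\|_{1,\overrightarrow{\w}} \leq \|x\|_{1,\overrightarrow{\w}}$ and convert it, via the triangle inequality applied separately on each weight class $\widetilde{T}_i$ and on $\widetilde{T}^c$, into a bound controlling the weighted $\ell_1$-norm of $h$ off the support $T_0$ by its weighted $\ell_1$-norm on $T_0$ plus the tail terms $\|x - x_s\|_1$ and $\|x_{\widetilde{T}_i \cap T_0^c}\|_1$. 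The assumed ordering $1 \geq w_1 \geq \dots \geq w_N \geq 0$ is what makes it natural to telescope these weights into the Abel-summation form that produces the coefficients $(w_{j-1} - w_j)$ appearing in $K_N$.

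Next I would set up the RIP machinery exactly as in \cite{Saab_weightedL1RIP}. I would partition the complement of the relevant index set into blocks $T_1, T_2, \dots$ of size $as$ ordered by decreasing magnitude of $|h|$, apply the RIP on $T_0 \cup T_1$ (or the appropriately enlarged $(a+1)s$-sized set), and use $\|Ah\|_2 \leq 2\epsilon$ together with the lower RIP bound to produce the fidelity term $C_0'\epsilon$. The tail blocks are controlled by the standard $\ell_1$-to-$\ell_2$ comparison $\sum_{k \geq 2}\|h_{T_k}\|_2 \leq (as)^{-1/2}\|h_{(T_0 \cup T_1)^c}\|_1$. Combining these two ingredients and invoking the cone inequality from the first step reduces everything to a single scalar inequality in $\delta_{as}$, $\delta_{(a+1)s}$, and the quantity $b = K_N/\sqrt{a}$, which is precisely condition \eqref{WL1 Nsupp RIP condition}.

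The main obstacle — and the genuinely new content relative to the single-weight case — is the first step: correctly bounding the weighted $\ell_1$-norm of $h$ on the support $T_0$ and extracting the right multiplier. In the single-weight case one writes $w + (1-w)\sqrt{1 + \rho - 2\alpha\rho}$ by splitting $|T_0|$ into the part inside $\widetilde{T}$ (weight $w$) and the part outside (weight $1$), then bounding a cardinality by Cauchy–Schwarz. With $N$ distinct weights I would instead decompose the weight vector on $\widetilde{T}$ as a telescoping sum $w_N \cdot \mathbbm{1}_{\widetilde{T}} + \sum_{j}(w_{j-1}-w_j)\mathbbm{1}_{\widetilde{T}_j \cup \dots \cup \widetilde{T}_N}$ (using the ordering and that all differences are nonnegative), apply the Cauchy–Schwarz cardinality bound to each nested indicator set separately, and track how each set's size contributes $\sum_{i \geq j}(\rho_i - 2\alpha_i\rho_i)$ under the square root. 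Getting the bookkeeping of the accuracy terms $\alpha_i$ right — in particular that each $\widetilde{T}_i \cap T_0$ contributes favorably while $\widetilde{T}_i \cap T_0^c$ shows up in the error term with coefficient $\sum_{j \neq i} w_j$ — is the delicate part, and it is what yields both the constant $K_N$ in \eqref{K_N} and the precise right-hand side of \eqref{KN bound}.
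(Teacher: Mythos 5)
Your overall architecture is exactly the paper's: a cone constraint extracted from the optimality of $\hat{x}$, the sorted-block tail bound, and an RIP step with Cauchy--Schwarz cardinality estimates; the error-term coefficients you anticipate (each $\|x_{\widetilde{T}_i\cap T_0^c}\|_1$ entering with coefficient $\sum_{j\neq i}w_j$) are also correct. However, the step you yourself single out as the crux fails as written. The telescoping identity you propose,
\[
\sum_{i=1}^N w_i\,\mathbbm{1}_{\widetilde{T}_i} \;=\; w_N\,\mathbbm{1}_{\widetilde{T}} + \sum_{j=2}^N (w_{j-1}-w_j)\,\mathbbm{1}_{\widetilde{T}_j\cup\dots\cup\widetilde{T}_N},
\]
is false: evaluated at an index of $\widetilde{T}_i$, the right-hand side equals $w_N + (w_1 - w_i)$, not $w_i$ (they agree only when all the weights coincide). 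If you insist on decomposing the weights $w_i$ themselves against nested sets, you are forced to use the \emph{prefix} sets $\widetilde{T}_1\cup\dots\cup\widetilde{T}_{j-1}$, which would place prefix sums $\sum_{i<j}(\rho_i-2\alpha_i\rho_i)$ under the square roots---not the suffix sums that appear in $K_N$. The suffix structure you (correctly) want comes instead from telescoping the \emph{complementary} coefficients, $1-w_i = (1-w_1)+\sum_{j=2}^{i}(w_{j-1}-w_j)$, so that after interchanging sums the difference $(w_{j-1}-w_j)$ collects exactly the sets $\widetilde{T}_i$ with $i\geq j$. This is what the paper does: after the reverse-triangle-inequality step (which itself requires adding and subtracting cross terms such as $w_i\|h_{\widetilde{T}_j\cap T_0^c}\|_1$ for $i\neq j$---this bookkeeping is what produces the constant $D$ and the final error coefficients), the cone bound takes the form
\[
\|h_{T_0^c}\|_1 \leq \Bigl(\sum_{i=1}^N w_i-(N-1)\Bigr)\|h_{T_0}\|_1 + \sum_{i=1}^N (1-w_i)\bigl(\|h_{\widetilde{T}_i^c\cap T_0}\|_1+\|h_{\widetilde{T}_i\cap T_0^c}\|_1\bigr)+2D,
\]
and telescoping the $(1-w_i)$ collapses the coefficient of $\|h_{T_0}\|_1$ to $w_N$ and pairs each $(w_{j-1}-w_j)$ with a set of cardinality $s\bigl(1+\sum_{i\geq j}(\rho_i-2\alpha_i\rho_i)\bigr)$, which yields $K_N$ after Cauchy--Schwarz.

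A second, smaller slip: your tail comparison $\sum_{k\geq 2}\|h_{T_k}\|_2 \leq (as)^{-1/2}\|h_{(T_0\cup T_1)^c}\|_1$ is backwards---blockwise Cauchy--Schwarz gives $(as)^{-1/2}\|h_{T_k}\|_1\leq\|h_{T_k}\|_2$, so the claimed inequality holds with $\geq$ in general. The correct shift-by-one bound $\|h_{T_k}\|_2\leq (as)^{-1/2}\|h_{T_{k-1}}\|_1$ yields $\sum_{k\geq 2}\|h_{T_k}\|_2 \leq (as)^{-1/2}\|h_{T_0^c}\|_1$, which is in fact what you need, since it is $\|h_{T_0^c}\|_1$ that the cone constraint controls.
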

\begin{remark}
The constants $C_0'$ and $C_1'$ are explicitly given by
\begin{align}\label{eqn::WL1 Nsupp C}
C_0' = \frac{2(1+\frac{K_N}{\sqrt{a}})}{\sqrt{1-\delta_{(a+1)s}} - \frac{K_N}{\sqrt{a}}\sqrt{1+\delta_{as}}}, \,\,\,\,
C_1' = \frac{2a^{-1/2}(\sqrt{1-\delta_{(a+1)s}}+\sqrt{1+\delta_{as}})}{\sqrt{1-\delta_{(a+1)s}} - \frac{K_N}{\sqrt{a}}\sqrt{1+\delta_{as}}}.
\end{align}
Note that $C_0'$ and $C_1'$ are identical to $C_0$ and $C_1$ from Theorem \ref{thm::WL1}, respectively, except that $b$ is replaced by $K_N$. Therefore, $C_0'\leq C_0$ and $C_1'\leq C_1$ whenever $K_N \leq b$.
\end{remark}
\begin{remark}
In Theorem 3.3 of \cite{MansourY_MultiWeightedL1} the authors define instead of our constant $K_N$ the quantity
\begin{align} \label{gamma}
\gamma = \sum_{i=1}^N w_i - (N-1) + \sum_{i=1}^N (1-w_i)\sqrt{1+\rho_i-2\alpha_i\rho_i}.
\end{align}
Thus when $N=1$ or when $w_i=1$ for all $i$, the result of \cite{MansourY_MultiWeightedL1} indeed reduces to that in \cite{Saab_weightedL1RIP} and \cite{CandeRT_Stable}, respectively. However, consider the simple case when $N=2$, $\alpha_1=\alpha_2 = 1$, and $w_1 = w_2 = w$. Then we would expect $\gamma$ to reduce to $b$ for any $\rho_1$ and $\rho_2$ such that $\rho_1+\rho_2 = \rho$ and $\alpha=1$. However, in this setting, $\gamma$ only reduces to $b$ when $\rho_1\rho_2 = 0$. Thus, the single weight result of \cite{Saab_weightedL1RIP} is not recovered as expected. This sub-optimal behavior is further illustrated in Figure \ref{fig::compareRIPconstants} below. 
\end{remark}
\begin{remark}
If $w_i =1$ for all $i$, then $K_N = 1$ and the result reduces to the $\ell_1$-minimization result without weights (see Theorem 2 in \cite{Saab_weightedL1RIP}, which is from \cite{CandeRT_Stable}). If $w_i = w$ and $\alpha_i=\alpha$ for all $i$, and $\sum_{i=1}^N\rho_i = \rho$, then the single weight result of \cite{Saab_weightedL1RIP} is recovered.  Theorem \ref{thm::WL1 Nsupp} thus recovers classical results in the un-weighted and single weight cases.
\end{remark}
\begin{remark}
To build intuition about the term $K_N$, we can consider the expression \eqref{K_N} for small $N$.
When $N=2$, we obtain,
\begin{align*}
K_2 &= w_2 + (1-w_1)\sqrt{1+\rho_1-2\alpha_1\rho_1+\rho_2-2\alpha_2\rho_2} + (w_1-w_2)\sqrt{1+\rho_2-2\alpha_2\rho_2}.
\end{align*}
When $N=3$, we obtain,
\begin{align*}
K_3 &= w_3 + (1-w_1)\sqrt{1+\rho_1-2\alpha_1\rho_1+\rho_2-2\alpha_2\rho_2+\rho_3-2\alpha_3\rho_3} \notag\\
&\quad\quad + (w_1-w_2)\sqrt{1+\rho_2-2\alpha_2\rho_2+\rho_3-2\alpha_3\rho_3} + (w_2-w_3)\sqrt{1+\rho_3-2\alpha_3\rho_3}.
\end{align*}
If either $w_1=w_2$ or $w_2=w_3$, then $K_3$ reduces to $K_2$, as desired.

We will see that small values of $K_N$ relax the requirement on the RIP constant. Suppose that $N=2$ and $\alpha_1 = \alpha_2 = 1$. Then
$$ K_2 = w_2 + (1-w_1)\sqrt{1-\rho_1-\rho_2} + (w_1-w_2)\sqrt{1-\rho_2}. $$
Suppose $1 \geq \rho_1 \gg \rho_2 \geq 0$. Then we would want to choose $w_1$ as small as possible (or as close to $w_2$ as possible) in order to minimize the dominating term $(w_1-w_2)\sqrt{1-\rho_2}$.  Similarly, if $1 \geq \rho_2 \gg \rho_1 \geq 0$, then the dominating term is $w_2$. In order to minimize $K_2$, we would want to choose $w_2$ as small as possible. Thus we see that when the support estimates are accurate, larger values of $\rho_1$ or $\rho_2$ encourage smaller corresponding weights. We also see that smaller values of $\alpha_1$ and $\alpha_2$ would encourage larger weights (as well as $w_1\approx w_2$). This also agrees with our intuition, in that we would want to select larger weights on inaccurate support estimates.
\end{remark}
\begin{remark}
Since $\delta_{as} < \delta_{(a+1)s}$, a sufficient condition for (\ref{WL1 Nsupp RIP condition}) to hold is 
\begin{align}\label{RIP condition N}
\delta_{(a+1)s} <  \frac{a-K_N^2}{a+K_N^2} := \delta^{K_N}.
\end{align}
Note that this is the same sufficient condition as seen in (\ref{WL1 RIP condition sufficient}), except with $b$ replaced by $K_N$.
\end{remark}
\edit{
\begin{remark}
\editt{If our goal is to weaken the restriction on the RIP constant and we assume} $\alpha_i$ and $\rho_i$ are known for each $i$, one could choose the weights to minimize the non-negative quantity $K_N$ and hence optimize the sufficient RIP condition (\ref{RIP condition N}). Minimizing $K_N$ \editt{in (\ref{K_N}) subject to the constraint that $0\leq w_i \leq 1$ for each $i$} is a linear program, which can be solved using standard techniques,  and it is well known that the solution will occur at a corner of the feasible region. For us, this means each of the optimal weights $w_i$ will take on the binary values 0 or 1. As an example, suppose $\rho_1=\rho_2 = 0.5$, $\alpha_1 = 0.1$, and $\alpha_2 = 0.9$. Then, solving the described optimization gives $w_1 = 1$, $w_2 = 0$, and $K_N = 0.78$. Of course, a drawback of this approach to selecting the weights is that it relies on knowledge of the $\rho_i$ and $\alpha_i$ parameters and does not necessarily imply the recovery is optimal. Moreover, the choice of weights will also impact the error bound (\ref{KN bound}). While the determination of the weights to obtain optimal recovery is an interesting  question, some heuristic options for selecting them are presented with our numerical experiments in Section \ref{sec::weighted experiments}.
\end{remark}}
To formalize the above remarks, we consider the simple case when all accuracies $\alpha_i$ are the same value, and show that as long as the accuracy level is high enough (greater than 1/2 to be precise) and the weights are smaller than that used in the single weight case, that the RIP requirements of Theorem \ref{thm::WL1 Nsupp} are strictly weaker than previous results for the single weight case.
The following Proposition shows that the smallest weight is most beneficial in relaxing the sufficient RIP condition, the largest weight is least beneficial, and a combination of weights in between produces an intermediate RIP condition.  This matches intuition, since if the support estimate is accurate\edit{,} one of course should use aggressive (small) weights on that set to encourage non-zero entries in the recovery.  On the other hand, if one is only confident about portions of the support, this proposition shows that by using non-uniform weights, one can do much better than simply selecting a single conservative weight. 
\begin{prop}
Define $\delta^b$ and $\delta^{K_N}$ as in (\ref{WL1 RIP condition sufficient}) and (\ref{RIP condition N}), respectively. Let $w_1\geq w_2\geq \dots \geq w_N$, $\sum_{i=1}^N \rho_i = \rho$ and $\alpha_1 = \alpha_2 = \dots = \alpha_N = \alpha$. For fixed $a$, $\alpha$, $\rho$, and $\rho_i$ for $i=1,\dots,N$, $\delta^b = \delta^b(w)$ and $\delta^{K_N} = \delta^{K_N}(w_1,\dots,w_N)$. Then $\delta^b(w_1) \leq \delta^{K_N}(w_1,\dots,w_N) \leq \delta^b(w_N)$ if and only if $\alpha\geq \frac{1}{2}$.
\end{prop}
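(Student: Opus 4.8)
The plan is to first strip away the cosmetic dependence on the restricted isometry constants and reduce everything to a comparison of $K_N$ with the single-weight quantities $b(w_1)$ and $b(w_N)$. Observe that both $\delta^b$ and $\delta^{K_N}$ arise by applying the same map $g(t)=\frac{a-t^2}{a+t^2}$ to $b$ and to $K_N$, and that $g$ is strictly decreasing on $[0,\infty)$ (its derivative with respect to $t^2$ is $-2a/(a+t^2)^2<0$). Since $b(w_1),b(w_N),K_N\ge 0$ — each being a sum of nonnegative terms — the chain $\delta^b(w_1)\le\delta^{K_N}\le\delta^b(w_N)$ is equivalent, after reversing inequalities, to
\[
b(w_N)\;\le\;K_N\;\le\;b(w_1).
\]
Thus the proposition becomes exactly: this sandwich holds iff $\alpha\ge 1/2$. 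Throughout I assume $1+(1-2\alpha)\rho\ge 0$ so every square root is real, which is the standing hypothesis making $b$ and $K_N$ well defined.

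Next I would put $K_N$ in a form that makes the two comparisons transparent. Writing $\alpha_i=\alpha$ and $\sigma_j:=\sum_{i=j}^N\rho_i$, all square-root terms have the shape $\beta_j:=\sqrt{1+(1-2\alpha)\sigma_j}$, and $b(w)=w+(1-w)\beta_1$ since $\sigma_1=\rho$. Introducing the convention $w_0:=1$, the definition \eqref{K_N} telescopes into
\[
K_N \;=\; w_N + \sum_{j=1}^N (w_{j-1}-w_j)\,\beta_j,
\]
where the coefficients $w_{j-1}-w_j\ge 0$ (weights are ordered) and sum to $1-w_N$. This representation is the crux. Comparing $K_N$ against $b(w_N)=w_N+(1-w_N)\beta_1$ (replace every $\beta_j$ by $\beta_1$) and against $b(w_1)=w_1+(1-w_1)\beta_1$, and using the telescoping identities $\sum_{j=1}^N(w_{j-1}-w_j)=1-w_N$ and $\sum_{j=2}^N(w_{j-1}-w_j)=w_1-w_N$, yields the clean expressions
\[
K_N-b(w_N)=\sum_{j=1}^N (w_{j-1}-w_j)(\beta_j-\beta_1),\qquad
b(w_1)-K_N=\sum_{j=2}^N (w_{j-1}-w_j)(1-\beta_j).
\]

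The remaining work is a sign analysis driven entirely by the position of $\alpha$ relative to $1/2$. Since $\sigma_j$ is nonincreasing in $j$ with $\sigma_1=\rho$, one reads off that $\beta_j-\beta_1$ carries the sign of $(1-2\alpha)(\sigma_j-\sigma_1)$ while $1-\beta_j$ carries the sign of $-(1-2\alpha)\sigma_j$. Hence when $\alpha\ge 1/2$ we get $\beta_j\le 1$ and $\beta_j\ge\beta_1$ for every $j$, so both displayed differences are nonnegative and the sandwich holds. For the converse, assuming genuinely non-uniform weights ($w_1>w_N$) and each $\rho_i>0$, if $\alpha<1/2$ then $\beta_j>1$ for all $j\le N$ while the coefficients are nonnegative with positive sum $w_1-w_N$, forcing $b(w_1)-K_N<0$; thus $K_N>b(w_1)$ and the sandwich fails. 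This gives the equivalence.

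The step I expect to be the main obstacle is the telescoping reorganization of $K_N$ together with pinning down the converse cleanly: one must check that the $w_0=1$ representation reproduces \eqref{K_N} verbatim, and then handle the boundary and degenerate situations. At $\alpha=1/2$ every $\beta_j=1$, so all three quantities collapse to $1$ and the sandwich holds with equality, consistent with ``$\alpha\ge 1/2$''; and in the degenerate constant-weight case $w_1=\dots=w_N$ one has $K_N=b(w_1)=b(w_N)$, so the sandwich is a trivial equality for every $\alpha$. The ``only if'' direction is therefore really a statement about non-uniform weights, and I would record the non-degeneracy assumption $w_1>w_N$ explicitly so that the stated equivalence is sharp.
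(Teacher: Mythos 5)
Your proposal is correct and follows essentially the same route as the paper's proof: both reduce the claim, via monotonicity of $t\mapsto\frac{a-t^2}{a+t^2}$, to the sandwich $b(w_N)\leq K_N\leq b(w_1)$, and both establish it by writing the differences as sums of $(w_{j-1}-w_j)$ times differences of square-root terms, followed by the same sign analysis in $(1-2\alpha)$; your telescoping with $w_0:=1$ is just a tidier bookkeeping of the paper's displays. The one point where you go beyond the paper is in recording explicitly the non-degeneracy needed for the ``only if'' direction ($w_1>w_N$ and $\rho_i>0$), which the paper's phrasing (``for the above inequality to hold, we must have\dots for some $j$'') assumes implicitly.
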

\begin{proof}
Define all terms as stated in the Proposition. Then, for fixed $a$, $\alpha$, $\rho$, and $\rho_i$ for $i=1,\dots,N$, 
$$b(w,\rho,\alpha) = b(w) \quad \mbox{ and } \quad K_N(w_1,\dots,w_N,\rho_1,\dots,\rho_N,\alpha_i,\dots,\alpha_N) = K_N(w_1,\dots,w_N).$$
It is sufficient to show $b(w_N) \leq K_N(w_1,\dots,w_N) \leq b(w_1)$ if and only if $\alpha\geq \frac{1}{2}$. Since $w_1\geq w_N$, it is quickly seen that $b(w_N) \leq b(w_1)$ if and only if $\alpha\geq \frac{1}{2}$.
Next, observe that $K_N(w_1,\dots,w_N) \leq b(w_1)$ holds if and only if
\begin{align*}
&w_N + (1-w_1)\sqrt{1 + \rho-2\alpha\rho} + \sum_{j=2}^N \left((w_{j-1}-w_j)\sqrt{1 + \sum_{i=j}^N(\rho_i-2\alpha\rho_i)}\right) \leq w_1 + (1-w_1)\sqrt{1+\rho-2\alpha\rho}
\end{align*}
which is equivalent to 
\begin{align*}
\sum_{j=2}^N \left((w_{j-1}-w_j)\sqrt{1 + \sum_{i=j}^N(\rho_i-2\alpha\rho_i)}\right) \leq (w_1-w_2) + (w_2-w_3) + \dots + (w_{N-1}-w_N) 
\end{align*}
and to
\begin{align*}
(w_1-w_2)\left(\sqrt{1 + \sum_{i=2}^N(\rho_i-2\alpha\rho_i)}-1 \right) + \dots + (w_{N-1}-w_N)\left(\sqrt{1 + \sum_{i=N}^N(\rho_i-2\alpha\rho_i)}-1\right) \leq 0.
\end{align*}
For the above inequality to hold, we must have $\sqrt{1 + \sum_{i=j}^N(\rho_i-2\alpha\rho_i)}-1\leq 0$ for some $j\in\{2,\dots,N\}$. However, this requirement for any $j$ means $\sum_{i=j}^N(\rho_i-2\alpha\rho_i) \leq 0$. For \textit{any} terms in this sum to be zero or less, we must have that $\alpha \geq \frac{1}{2}$.
Similarly, the inequality $b(w_N)\leq K_N(w_1,\dots,w_N) $ holds if and only if
\begin{align*}
&w_N + (1-w_N)\sqrt{1+\rho-2\alpha\rho} \leq w_N + (1-w_1)\sqrt{1 + \rho-2\alpha\rho} + \sum_{j=2}^N \left((w_{j-1}-w_j)\sqrt{1 + \sum_{i=j}^N(\rho_i-2\alpha\rho_i)}\right)
\end{align*}
which is equivalent to 
\begin{align*}
(w_1-w_2+w_2-w_3+\dots+w_{N-1}-w_N)\sqrt{1+\rho-2\alpha\rho} \leq \sum_{j=2}^N \left((w_{j-1}-w_j)\sqrt{1 + \sum_{i=j}^N(\rho_i-2\alpha\rho_i)}\right)
\end{align*}
and to
\begin{align*}
0 \leq (w_1-w_2)&\left(\sqrt{1 + \sum_{i=2}^N(\rho_i-2\alpha\rho_i)}-\sqrt{1+\rho-2\alpha\rho} \right) + \dots \\
&+ (w_{N-1}-w_N)\left(\sqrt{1 + \rho_N-2\alpha\rho_N}-\sqrt{1+\rho-2\alpha\rho} \right). \\
\end{align*}
For the above inequality to hold, we must have $\sqrt{1 + \sum_{i=j}^N(\rho_i-2\alpha\rho_i)}-\sqrt{1+\rho-2\alpha\rho} \geq 0$ for some $j\in\{2,\dots,N\}$. However, this requirement for any $j$ means $1 + \sum_{i=j}^N(\rho_i-2\alpha\rho_i) \geq 1+\rho-2\alpha\rho$ and thus  $(\sum_{i=j}^N \rho_i - \rho)(1-2\alpha)\geq 0$.  Since $\sum_{i=j}^N \rho_i \leq \rho$ for all $j=2,\dots,N$,  we must have $\alpha \geq \frac{1}{2}$. Note that these results are tight since $K_N(w_N,\dots,w_N) = b(w_N)$ and $K_N(w_1,\dots,w_1) = b(w_1)$. 
\end{proof}

Figure \ref{fig::compareRIPconstants} compares the value of $\delta^b$ defined in (\ref{WL1 RIP condition sufficient}) when a single weight is used versus $\delta^{K_N}$ defined in (\ref{RIP condition N}) when two or three distinct weights are used as a function of the support estimate sizes. We set $\alpha_i = \alpha = 1$ for $i=1,\dots,N$ and $\sum_{i=1}^N \rho_i = \rho = 1$, where $N=2$ for the plot on the left and $N=3$ for the plot on the right. When $N=2$, the horizontal lines indicate $\delta^b$ 
when the weight $w=0.5$ or $w=0.25$ is used on the entire support estimate; in between, we see the transition of $\delta^{K_2}$ as $\rho_1$ varies with $w_1=0.5$ and $w_2=0.25$. Note that although the horizontal axis only shows the value of $\rho_1$, this determines $\rho_2$ since we take $\rho_1+\rho_2 = 1$. For comparison, we also include the value of 
\begin{align}\label{gamma sufficient condition}
\delta^\gamma := \frac{a-\gamma^2}{a+\gamma^2}
\end{align}
from \cite{MansourY_MultiWeightedL1}, where $\gamma$ is defined in (\ref{gamma}). Indeed, $\delta^\gamma$ behaves as expected at the endpoints $\rho_1=0$ and $\rho_1=1$, but falls below $\delta^b$ with $w=0.5$ for many values of $\rho_1\in (0,1)$. This again highlights the sub-optimality of the prior result \cite{MansourY_MultiWeightedL1} and the improvement offered by Theorem \ref{thm::WL1 Nsupp}.
When $N=3$, we again see the transition in $\delta^{K_3}$ 
as $\rho_1$ and $\rho_2$ are varied (which again determines $\rho_3$) with $w_1=0.5$, $w_2=0.4$, and $w_3=0.25$. The value of $\delta^b$ when a single weight is used is included for comparison. In both cases, we see that the smallest weight results in the best (largest) RIP condition and the largest weight results in the worst (smallest) RIP condition since the accuracy is assumed to be perfect, and intermediate behavior is seen in between.  This illustrates that our main result recovers the classical results in the single weight case, and that the case of multiple weights interpolates as expected. 

\begin{figure}[!htbp]
\centering
\begin{tabular}{cc}
\includegraphics[height=2.25in,width=2.7in]{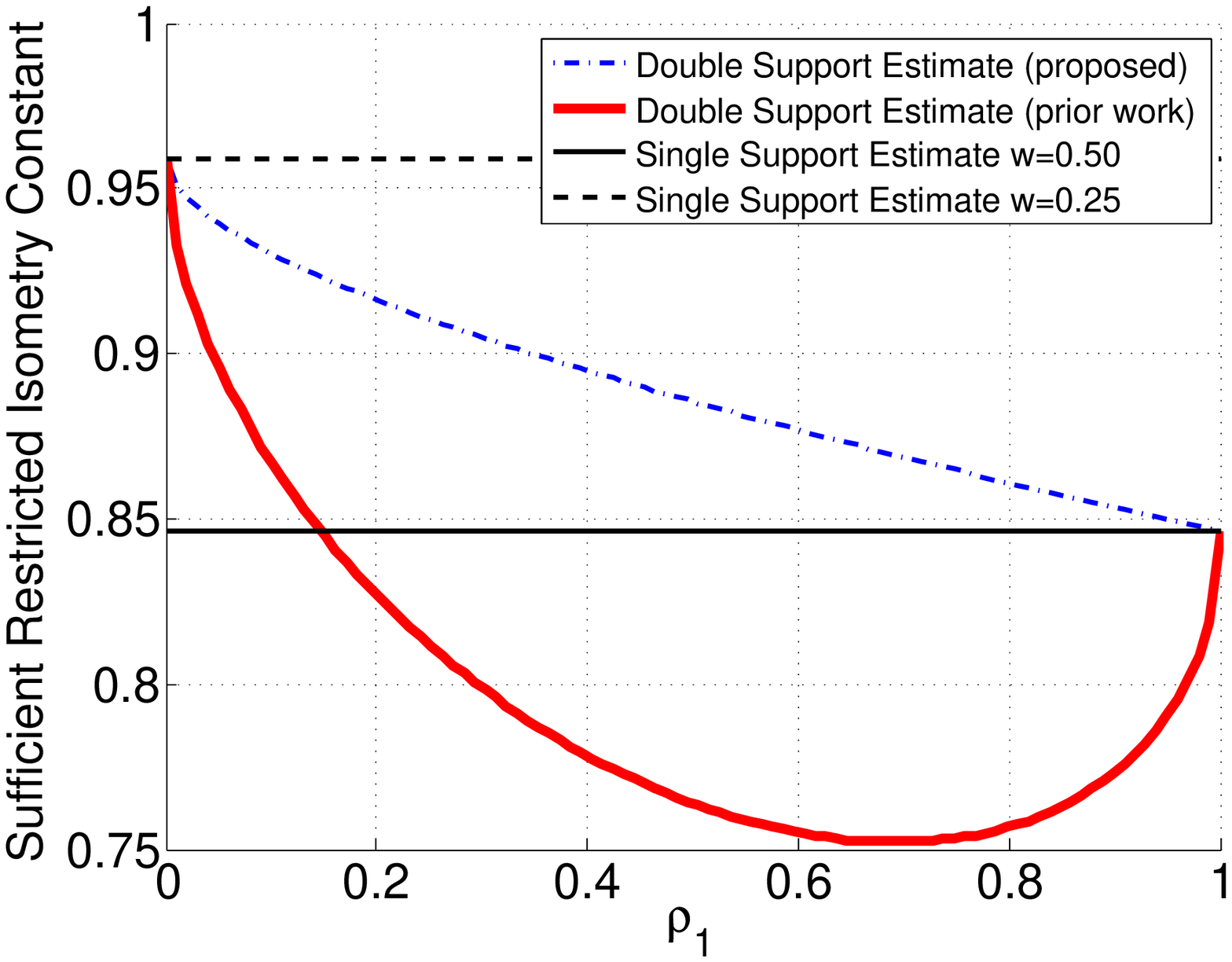} &
\includegraphics[height=2.25in,width=2.9in]{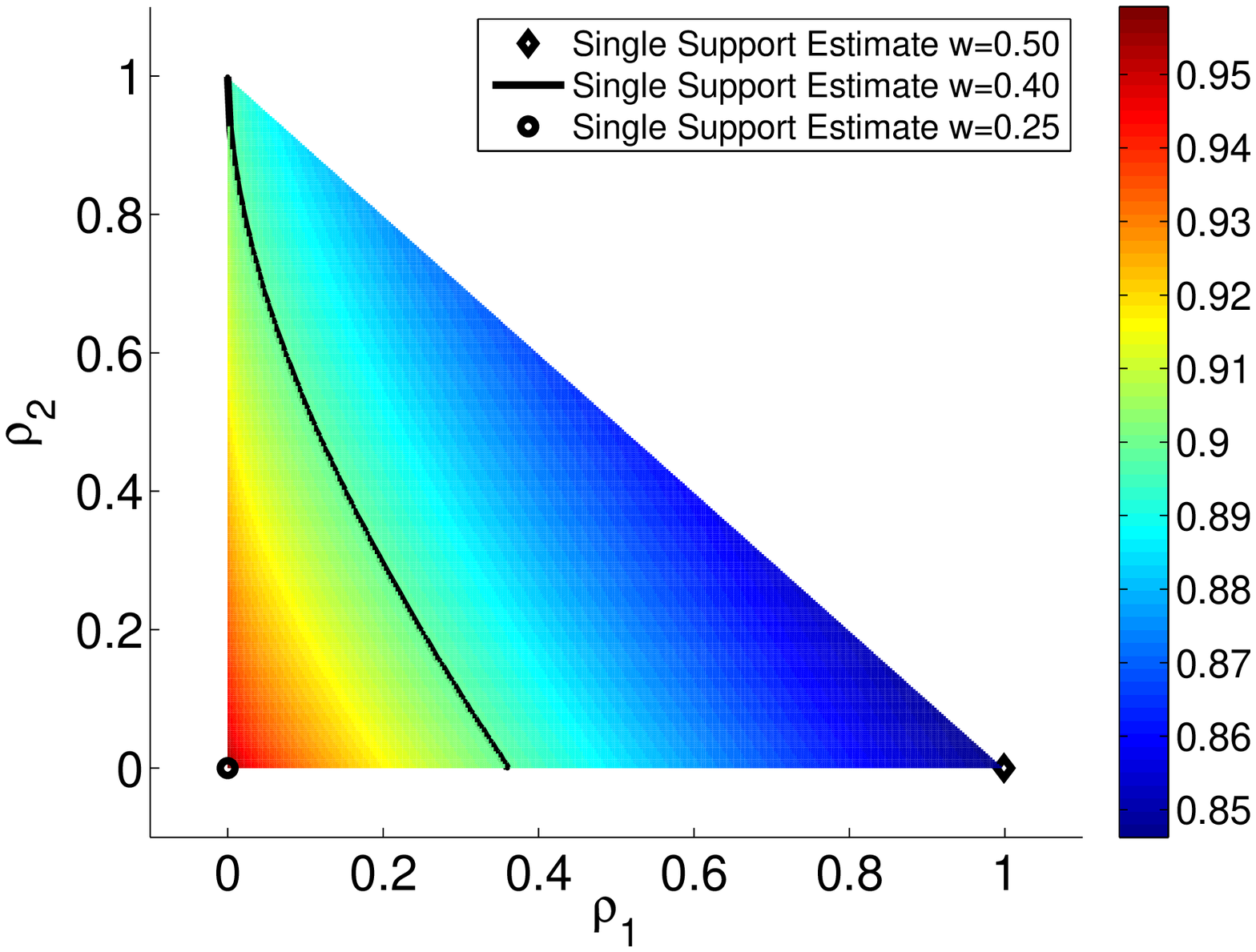} \\
(a) $N=2$ & (b) $N=3$
\end{tabular}
\caption{Comparison of $\delta^b$ (defined in (\ref{WL1 RIP condition sufficient})) and $\delta^{K_N}$ (defined in (\ref{RIP condition N})) with $a=3$. (a) We set $\alpha_1=\alpha_2=\alpha=1$, $\rho_1+\rho_2 = \rho = 1$, $w_1 = 0.5$, and $w_2=0.25$. The value of $\delta^{K_2}$ (dash-dotted line) 
is compared to $\delta^b$ in the single weight case with $w=0.5$ (solid horizontal line) and $w=0.25$ (dashed horizontal line). The value of $\delta^\gamma$ (from \cite{MansourY_MultiWeightedL1} and defined in (\ref{gamma sufficient condition})) is included for comparison (thick solid line). (b) We set $\alpha_1=\alpha_2=\alpha_3=\alpha=1$, $\rho_1+\rho_2 + \rho_3 = \rho = 1$, $w_1 = 0.5$, $w_2=0.4$, and $w_3=0.25$. The value of $\delta^{K_3}$ (indicated by the color) is compared to $\delta^b$ in the single weight case with $w=0.5$ (diamond marker), $w=0.4$ (black line), and $w=0.25$ (circle marker).}
\label{fig::compareRIPconstants}
\end{figure}

\subsection{Proof of Main Result} \label{sec::proof}

We now present the proof of Theorem \ref{thm::WL1 Nsupp}, which is inspired by that in \cite{Saab_weightedL1RIP} and \cite{CandeRT_Stable}. Let $\hat{x} = x+h$ be the minimizer of (\ref{eqn::WL1 Nsupp}), and let $T_0$ denote the set of the largest $s$ coefficients of $x$ in magnitude. Our goal is to bound the norm of the error $h$. 

{\bfseries Proof Roadmap.}  We will proceed with a sequence of lemmas, and then combine the results of the lemmas to obtain the final error bound. Briefly, the proof is organized as follows:

\begin{itemize}
\item \textit{Lemma \ref{lemma::cone constraint} - Cone Constraint}: The main challenge of the proof lies here, where we provide a cone constraint on $\|h_{T_0^c} \|_1$. This constraint is critical because it ultimately determines the parameter $K_N$ given in (\ref{K_N}). 

\item  \textit{Lemma \ref{lemma::bounding the tail} - Bounding the Tail}: Sorting the coefficients of $h_{T_0^c}$ together with Lemma \ref{lemma::cone constraint} allows us to bound the tail of $h$.

\item \textit{Lemma \ref{lemma::consequence of the rip} - Consequence of the RIP}:  Due to the RIP assumption on $A$, along with the previous lemmas, we are able to define $K_N$ and bound the largest portion of $h$.

\item Combining Lemma \ref{lemma::bounding the tail} and Lemma \ref{lemma::consequence of the rip}, we obtain the final bound on the error $\|h\|_2$.

\end{itemize}

{\bfseries Proof Notation.}
We instate the notation of Theorem \ref{thm::WL1 Nsupp}; let $|\widetilde{T}_i| = \rho_is$ for $i=1,\dots,N$, where $0\leq\rho_i\leq a_i$ and $\sum_{i=1}^N a_i > 1$. Define the accuracy of the support estimates $\alpha_i = \frac{|\widetilde{T}_i\cap T_0|}{|\widetilde{T}_i|}$ for $i=1,\dots,N$. Set $\widetilde{T} = \bigcup_{i=1}^N \widetilde{T}_i$. For ease of notation, let us also define
$$\W = \sum_{i=1}^N{w_i}$$
and 
$$ D =  \W\|x_{T_0^c} \|_1 + \left(1-\W\right)\|x_{\widetilde{T}^c\cap T_0^c} \|_1 - \sum_{i=1}^N (\W-w_i)\|x_{\widetilde{T}_i\cap T_0^c} \|_1. $$
Proceeding as in \cite{Saab_weightedL1RIP}, we will sort the coefficients of $h_{T_0^c}$ by partitioning $T_0^c$ into disjoint sets $T_j$, $j\in \{1,2,\dots\}$ each of size $as$, where $a \geq \sum_{i=1}^N a_i >1$ and $a\in \frac{1}{s}\mathbb{Z}$ (to ensure the cardinality of each $T_j$ is an integer). That is, $T_1$ indexes the $as$ largest in magnitude coefficients of $h_{T_0^c}$, $T_2$ indexes the second $as$ largest in magnitude coefficients of $h_{T_0^c}$, and so on. Define $T_{01} = T_0 \cup T_1$. 

$\;$\\
\indent We now prove each of the above mentioned lemmas in sequence, and their combination will complete the proof.

\begin{lemma}[Cone Constraint]\label{lemma::cone constraint}
The vector $h$ obeys the following cone constraint,
\begin{align} \label{option4}
\|h_{T_0^c} \|_1 &\leq  w_N\|h_{T_0} \|_1 + (1-w_1)\|h_{T_0\cup \bigcup_{i=1}^N \widetilde{T}_i \setminus \bigcup_{i=1}^N (\widetilde{T}_i\cap T_0)}\|_1 + \sum_{j=2}^N(w_{j-1}-w_j)\|h_{T_0\cup \bigcup_{i=j}^N \widetilde{T}_i \setminus \bigcup_{i=j}^N (\widetilde{T}_i\cap T_0)}\|_1 +2D.
\end{align}
\end{lemma}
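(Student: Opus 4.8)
The plan is to exploit the optimality of $\hat x = x+h$ together with the feasibility of the true signal. Since $y = Ax+z$ with $\|z\|_2\le\epsilon$, the vector $x$ itself satisfies the constraint in (\ref{eqn::WL1 Nsupp}), so the minimizer obeys $\|x+h\|_{1,\overrightarrow{\w}}\le\|x\|_{1,\overrightarrow{\w}}$. First I would split both sides over $T_0$ and $T_0^c$ and apply, entrywise, the forward and reverse triangle inequalities, namely $\w_k|x_k+h_k|\ge \w_k|x_k|-\w_k|h_k|$ on $T_0$ and $\w_k|x_k+h_k|\ge \w_k|h_k|-\w_k|x_k|$ on $T_0^c$. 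Cancelling the common term $\|x_{T_0}\|_{1,\overrightarrow{\w}}$ yields the \emph{weighted} cone constraint
\[
\|h_{T_0^c}\|_{1,\overrightarrow{\w}} \le \|h_{T_0}\|_{1,\overrightarrow{\w}} + 2\|x_{T_0^c}\|_{1,\overrightarrow{\w}},
\]
which is the only place the minimization property is invoked. Everything after this is bookkeeping.

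Next I would translate this weighted inequality into the claimed plain-$\ell_1$ statement. Using disjointness of the $\widetilde{T}_i$ and the fact that $\w_k=w_i$ on $\widetilde{T}_i$ and $\w_k=1$ on $\widetilde{T}^c$, a direct expansion gives $\|x_{T_0^c}\|_{1,\overrightarrow{\w}} = \|x_{\widetilde{T}^c\cap T_0^c}\|_1 + \sum_{i=1}^N w_i\|x_{\widetilde{T}_i\cap T_0^c}\|_1$; substituting $\|x_{T_0^c}\|_1 = \|x_{\widetilde{T}^c\cap T_0^c}\|_1 + \sum_i \|x_{\widetilde{T}_i\cap T_0^c}\|_1$ into the Proof-Notation definition of $D$ and collecting terms, I would verify that this is \emph{exactly} $D$. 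On the left I would use the same weight description to write the identity $\|h_{T_0^c}\|_{1,\overrightarrow{\w}} = \|h_{T_0^c}\|_1 - \sum_{i=1}^N (1-w_i)\|h_{\widetilde{T}_i\cap T_0^c}\|_1$, and on the right I would relax $\|h_{T_0}\|_{1,\overrightarrow{\w}}\le \|h_{T_0}\|_1$ (valid since each $w_i\le1$; this is the single inequality that makes the Lemma an inequality rather than an equality). Combining these gives the clean intermediate bound
\[
\|h_{T_0^c}\|_1 \le \|h_{T_0}\|_1 + \sum_{i=1}^N(1-w_i)\|h_{\widetilde{T}_i\cap T_0^c}\|_1 + 2D.
\]

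Finally, I would repackage this into the telescoped ``nested set'' form of (\ref{option4}). Abbreviating $t=\|h_{T_0}\|_1$ and $a_i=\|h_{\widetilde{T}_i\cap T_0^c}\|_1$, and noting that $\bigcup_{i=j}^N\widetilde{T}_i\setminus\bigcup_{i=j}^N(\widetilde{T}_i\cap T_0)=\bigcup_{i=j}^N(\widetilde{T}_i\cap T_0^c)$, disjointness gives $\|h_{T_0\cup\bigcup_{i=j}^N(\widetilde{T}_i\cap T_0^c)}\|_1 = t+\sum_{i=j}^N a_i$. I would then collect coefficients on the right-hand side of (\ref{option4}): the coefficient of $t$ is $w_N+(1-w_1)+\sum_{j=2}^N(w_{j-1}-w_j)=1$ and the coefficient of each $a_i$ is $(1-w_1)+(w_1-w_i)=1-w_i$, so that the nested form equals precisely $t+\sum_i(1-w_i)a_i+2D$, i.e.\ the intermediate bound. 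The assumed ordering $w_1\ge\cdots\ge w_N$ is what guarantees the increments $w_{j-1}-w_j$, $1-w_1$, and $w_N$ are nonnegative, which is needed to keep the bound in this form for the subsequent estimates; the coefficient collapse itself is a pure telescoping identity.

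The main obstacle is not any single estimate but this last combinatorial step: the nested decomposition of $T_0^c$ must be chosen so that, after telescoping, the coefficients collapse to exactly $1$ on $\|h_{T_0}\|_1$ and $1-w_i$ on each tail piece. The reason to retain the telescoped form rather than the algebraically simpler equal expression is that the particular sets $T_0\cup\bigcup_{i=j}^N(\widetilde{T}_i\cap T_0^c)$ are precisely those on which the ensuing tail-sorting and RIP lemmas operate, and their cardinalities are what generate the radicals assembling into the constant $K_N$ in (\ref{K_N}).
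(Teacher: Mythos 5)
Your first two steps are correct, and they are a genuinely cleaner route than the paper's: the weighted cone constraint $\|h_{T_0^c}\|_{1,\overrightarrow{\w}} \le \|h_{T_0}\|_{1,\overrightarrow{\w}} + 2\|x_{T_0^c}\|_{1,\overrightarrow{\w}}$ is exactly the right consequence of minimality, and your identification $\|x_{T_0^c}\|_{1,\overrightarrow{\w}} = D$ is correct (the paper never states this identity, but it is what its add-and-subtract bookkeeping amounts to). The gap is in the final identification, and it is real: you have parsed the sets in (\ref{option4}) as $T_0\cup\bigl(\bigcup_{i=j}^N\widetilde{T}_i\setminus\bigcup_{i=j}^N(\widetilde{T}_i\cap T_0)\bigr) = T_0\cup\bigcup_{i=j}^N\widetilde{T}_i$, whereas the paper means $\bigl(T_0\cup\bigcup_{i=j}^N\widetilde{T}_i\bigr)\setminus\bigcup_{i=j}^N(\widetilde{T}_i\cap T_0)$, i.e.\ the \emph{symmetric difference} of $T_0$ and $U_j:=\bigcup_{i=j}^N\widetilde{T}_i$. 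The paper's reading is forced by how the lemma is consumed downstream: in the proof of Lemma \ref{lemma::consequence of the rip} this set is assigned cardinality $s+\sum_{i=j}^N(\rho_i-2\alpha_i\rho_i)s$ --- only the symmetric difference has this size, while your set has size $s+\sum_{i=j}^N\rho_i(1-\alpha_i)s$ --- and these cardinalities are precisely what produce the radicals $\sqrt{1+\sum_{i=j}^N(\rho_i-2\alpha_i\rho_i)}$ in $K_N$. With the correct sets one has $\|h_{T_0\cup U_j\setminus(U_j\cap T_0)}\|_1 = \|h_{T_0}\|_1 - \sum_{i=j}^N\|h_{\widetilde{T}_i\cap T_0}\|_1 + \sum_{i=j}^N\|h_{\widetilde{T}_i\cap T_0^c}\|_1$, so after your (correct) coefficient collapse the right-hand side of (\ref{option4}) equals $\|h_{T_0}\|_1 + \sum_{i=1}^N(1-w_i)\bigl(\|h_{\widetilde{T}_i\cap T_0^c}\|_1 - \|h_{\widetilde{T}_i\cap T_0}\|_1\bigr) + 2D$, which is smaller than your intermediate bound by $\sum_{i=1}^N(1-w_i)\|h_{\widetilde{T}_i\cap T_0}\|_1$. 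What you have proved is therefore a strictly weaker inequality than the lemma.

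The weakening is not cosmetic. If your version were propagated through Lemmas \ref{lemma::bounding the tail} and \ref{lemma::consequence of the rip}, every radical in $K_N$ would become $\sqrt{1+\sum_{i=j}^N\rho_i(1-\alpha_i)}\ge 1$, hence $K_N\ge w_N+(1-w_1)+\sum_{j=2}^N(w_{j-1}-w_j)=1$: the resulting RIP condition could never be weaker than the un-weighted one, no matter how accurate the support estimates are, which defeats the purpose of the theorem (accurate estimates, $\alpha_i>1/2$, are what push $K_N$ below $1$). The fix lies inside your own argument: do not make the relaxation $\|h_{T_0}\|_{1,\overrightarrow{\w}}\le\|h_{T_0}\|_1$ --- the step you singled out as ``the single inequality.'' Instead expand the weighted norm on $T_0$ exactly, as you did on $T_0^c$: $\|h_{T_0}\|_{1,\overrightarrow{\w}} = \|h_{T_0}\|_1 - \sum_{i=1}^N(1-w_i)\|h_{\widetilde{T}_i\cap T_0}\|_1$. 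The weighted cone constraint then reads $\|h_{T_0^c}\|_1 \le \|h_{T_0}\|_1 + \sum_{i=1}^N(1-w_i)\bigl(\|h_{\widetilde{T}_i\cap T_0^c}\|_1 - \|h_{\widetilde{T}_i\cap T_0}\|_1\bigr) + 2D$, and your telescoping identity, applied to these differences rather than to $\|h_{\widetilde{T}_i\cap T_0^c}\|_1$ alone, yields exactly (\ref{option4}) with the symmetric-difference sets. In other words, the lemma is an exact reformulation of the weighted cone constraint; your approach is the right one, and shorter than the paper's derivation, but only if the $\|h_{\widetilde{T}_i\cap T_0}\|_1$ terms are kept rather than discarded.
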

\begin{proof}
Since $\hat{x}=x+h$ is a minimizer of (\ref{eqn::WL1 Nsupp}), then $ \|x+h\|_{1,\overrightarrow{\w}} \leq \|x \|_{1,\overrightarrow{\w}}$. By the choice of weights, we have
$$ \sum_{i=1}^N w_i \|x_{\widetilde{T}_i}+h_{\widetilde{T}_i} \|_1 + \|x_{\widetilde{T}^c}+h_{\widetilde{T}^c} \|_1 \leq \sum_{i=1}^N w_i \|x_{\widetilde{T}_i} \|_1 + \|x_{\widetilde{T}^c} \|_1. $$
Furthermore, we have
\begin{align*}
\sum_{i=1}^N (w_i \|x_{\widetilde{T}_i\cap T_0}+&h_{\widetilde{T}_i\cap T_0} \|_1 + w_i \|x_{\widetilde{T}_i\cap T_0^c}+h_{\widetilde{T}_i\cap T_0^c} \|_1)
+ \|x_{\widetilde{T}^c\cap T_0}+h_{\widetilde{T}^c\cap T_0} \|_1 + \|x_{\widetilde{T}^c\cap T_0^c}+h_{\widetilde{T}^c\cap T_0^c} \|_1\\
&\leq \sum_{i=1}^N (w_i \|x_{\widetilde{T}_i\cap T_0} \|_1 + w_i \|x_{\widetilde{T}_i\cap T_0^c} \|_1)
+ \|x_{\widetilde{T}^c\cap T_0} \|_1 + \|x_{\widetilde{T}^c\cap T_0^c} \|_1.
\end{align*}
Next, we use the reverse triangle inequality to get
\begin{align}\label{eqn::reverse triangle}
\sum_{i=1}^N w_i\|h_{\widetilde{T}_i\cap T_0^c} \|_1 + \|h_{\widetilde{T}^c\cap T_0^c} \|_1
\leq \sum_{i=1}^N w_i \|h_{\widetilde{T}_i\cap T_0} \|_1 + \|h_{\widetilde{T}^c\cap T_0} \|_1 + 2(\sum_{i=1}^N w_i \|x_{\widetilde{T}_i\cap T_0^c} \|_1 + \|x_{\widetilde{T}^c\cap T_0^c} \|_1).
\end{align}

Now, we can write $\|h_{T_0^c}\|_1 = \sum_{i=1}^N \|h_{\widetilde{T}_i\cap T_0^c} \|_1 + \|h_{\widetilde{T}^c\cap T_0^c} \|_1$. 
Let us add and subtract $w_i\|h_{\widetilde{T}_j\cap T_0^c} \|_1$ for all pairs of $i$ and $j$ such that $i,j=1,\dots,N$ and $i\neq j$, and $w_i\|h_{\widetilde{T}^c \cap T_0^c} \|_1$ for $i=1,\dots,N$ to the left side of (\ref{eqn::reverse triangle}). Then the left side of (\ref{eqn::reverse triangle}) becomes
$$ \W \|h_{T_0^c} \|_1 + (1-\W) \|h_{\widetilde{T}^c\cap T_0^c} \|_1 - \sum_{i=1}^N (\W-w_i)\|h_{\widetilde{T}_i\cap T_0^c} \|_1.$$
Similarly, we can write $\|h_{T_0}\|_1 = \sum_{i=1}^N \|h_{\widetilde{T}_i\cap T_0} \|_1 + \|h_{\widetilde{T}^c\cap T_0} \|_1$. 
Let us add and subtract $w_i\|h_{\widetilde{T}_j\cap T_0} \|_1$ for all pairs of $i$ and $j$ such that $i,j=1,\dots,N$ and $i\neq j$, and $w_i\|h_{\widetilde{T}^c \cap T_0} \|_1$ for $i=1,\dots,N$ to the right side of (\ref{eqn::reverse triangle}), as well as $w_i\|x_{\widetilde{T}_j\cap T_0^c} \|_1$ for all pairs of $i$ and $j$ such that $i,j=1,\dots,N$ and $i\neq j$, and $w_i\|x_{\widetilde{T}^c \cap T_0^c} \|_1$ for $i=1,\dots,N$. Then the right side of (\ref{eqn::reverse triangle}) becomes 
\begin{align*}
\W &\|h_{T_0} \|_1 + (1-\W) \|h_{\widetilde{T}^c\cap T_0} \|_1 - \sum_{i=1}^N (\W-w_i)\|h_{\widetilde{T}_i\cap T_0} \|_1 \\
&+ 2(\W \|x_{T_0^c} \|_1 + (1-\W) \|x_{\widetilde{T}^c\cap T_0^c} \|_1 - \sum_{i=1}^N (\W-w_i)\|x_{\widetilde{T}_j\cap T_0^c} \|_1).
\end{align*}
Putting these together, and using our definition of $D$, we have
\begin{align}\label{20}
\W \|h_{T_0^c} \|_1 &+ (1-\W) \|h_{\widetilde{T}^c\cap T_0^c} \|_1 - \sum_{i=1}^N (\W-w_i)\|h_{\widetilde{T}_i\cap T_0^c} \|_1 \notag \\
&\leq \W \|h_{T_0} \|_1 + (1-\W) \|h_{\widetilde{T}^c\cap T_0} \|_1 - \sum_{i=1}^N (\W-w_i)\|h_{\widetilde{T}_i\cap T_0} \|_1 +2D.
\end{align}

But, we can also write $\|h_{T_0^c} \|_1$ as 
$$ \|h_{T_0^c} \|_1 = \W \|h_{T_0^c} \|_1 + \sum_{i=1}^N \left((1-\W) \|h_{\widetilde{T}_i\cap T_0^c}\|_1\right) + (1-\W)\|h_{\widetilde{T}^c\cap T_0^c}\|_1.$$
Solving for $\W \|h_{T_0^c} \|_1$ and substituting into (\ref{20}) gives
\begin{align*}
\|h_{T_0^c} \|_1 &-  \sum_{i=1}^N \left((1-\W) \|h_{\widetilde{T}_i\cap T_0^c}\|_1\right) - (1-\W)\|h_{\widetilde{T}^c\cap T_0^c}\|_1
+ (1-\W) \|h_{\widetilde{T}^c\cap T_0^c} \|_1 - \sum_{i=1}^N (\W-w_i)\|h_{\widetilde{T}_i\cap T_0^c} \|_1 \\
&\leq \W \|h_{T_0} \|_1 + (1-\W) \|h_{\widetilde{T}^c\cap T_0} \|_1 - \sum_{i=1}^N (\W-w_i)\|h_{\widetilde{T}_i\cap T_0} \|_1 +2D.
\end{align*}
Simplifying, we get 
\begin{align} \label{21}
\|h_{T_0^c} \|_1&\leq \sum_{i=1}^N \left((1-\W) \|h_{\widetilde{T}_i\cap T_0^c}\|_1\right) + \sum_{i=1}^N (\W-w_i)\|h_{\widetilde{T}_i\cap T_0^c} \|_1 \notag\\
&\quad\quad+ \W \|h_{T_0} \|_1 + (1-\W) \|h_{\widetilde{T}^c\cap T_0} \|_1 - \sum_{i=1}^N (\W-w_i)\|h_{\widetilde{T}_i\cap T_0} \|_1 +2D \\
&= \sum_{i=1}^N(1-w_i)\|h_{\widetilde{T}_i\cap T_0^c} \|_1 +  \W \|h_{T_0} \|_1 + \|h_{\widetilde{T}^c\cap T_0} \|_1 \notag \\
&\quad\quad  - \sum_{i=1}^N w_i\left(\|h_{\widetilde{T}^c\cap T_0} \|_1 + \sum_{j=1, j\neq i}^N \|h_{\widetilde{T}_j\cap T_0} \|_1 \right) + 2D \notag \\
&= \sum_{i=1}^N(1-w_i)\|h_{\widetilde{T}_i\cap T_0^c} \|_1 +  \W \|h_{T_0} \|_1 + \|h_{\widetilde{T}^c\cap T_0} \|_1 - \sum_{i=1}^N w_i\|h_{\widetilde{T}_i^c\cap T_0} \|_1 \notag \\
&\quad\quad +\sum_{i=1}^N\|h_{\widetilde{T}_i^c\cap T_0} \|_1 - \sum_{i=1}^N\|h_{\widetilde{T}_i^c\cap T_0} \|_1 +2D \label{lemma step 1}\\
&= \W \|h_{T_0} \|_1 + \|h_{\widetilde{T}^c\cap T_0} \|_1 - \sum_{i=1}^N\|h_{\widetilde{T}_i^c\cap T_0} \|_1+ \sum_{i=1}^N(1-w_i)\left(\|h_{\widetilde{T}_i\cap T_0^c} \|_1+ \|h_{\widetilde{T}_i^c\cap T_0} \|_1\right) + 2D \notag \\
&= \left(\W -(N-1)\right) \|h_{T_0} \|_1 + \sum_{i=1}^N (1-w_i)\left(\|h_{\widetilde{T}_i^c\cap T_0} \|_1 + \|h_{\widetilde{T}_i\cap T_0^c} \|_1\right) + 2D \label{lemma step 2},
\end{align}
where in (\ref{lemma step 1}) we have added zero and observed that $\|h_{\widetilde{T}^c\cap T_0} \|_1 + \sum_{j=1, j\neq i}^N \|h_{\widetilde{T}_j\cap T_0} \|_1=\|h_{\widetilde{T}_i^c \cap T_0} \|_1$, and in (\ref{lemma step 2}) we have observed that $\sum_{i=1}^N\|h_{\widetilde{T}_i^c\cap T_0} \|_1 = (N-1)\|h_{T_0} \|_1 + \|h_{\widetilde{T}^c\cap T_0} \|_1$.
Then assuming, without loss of generality, $w_1 \geq w_2 \geq \dots \geq w_N$, and writing $1-w_i = 1-w_1+w_1-w_i$ for $i>1$, we have
\begin{align}
\|h_{T_0^c} \|_1 &\leq  \left(\W -(N-1)\right)\|h_{T_0} \|_1 + (1-w_1)\sum_{i=1}^N [\|h_{\widetilde{T}_i^c\cap T_0} \|_1 + \|h_{\widetilde{T}_i\cap T_0^c} \|_1] \notag\\
&\quad\quad + \sum_{i=2}^N (w_1-w_i)[\|h_{\widetilde{T}_i^c\cap T_0} \|_1 + \|h_{\widetilde{T}_i\cap T_0^c} \|_1] +2D.
\end{align} 
Next, write $w_1-w_i = w_1-w_2+w_2-w_i$ for $i>2$. Then we have
\begin{align}
\|h_{T_0^c} \|_1 &\leq  \left(\W -(N-1)\right)\|h_{T_0} \|_1 + (1-w_1)\sum_{i=1}^N [\|h_{\widetilde{T}_i^c\cap T_0} \|_1 + \|h_{\widetilde{T}_i\cap T_0^c} \|_1] \notag\\
&\quad\quad + (w_1-w_2)\sum_{i=2}^N[\|h_{\widetilde{T}_i^c\cap T_0} \|_1 + \|h_{\widetilde{T}_i\cap T_0^c} \|_1] + \sum_{i=3}^N (w_2-w_i)[\|h_{\widetilde{T}_i^c\cap T_0} \|_1 + \|h_{\widetilde{T}_i\cap T_0^c} \|_1] +2D.
\end{align} 
Continuing in this manner gives us
\begin{align}
\|h_{T_0^c} \|_1 &\leq  \left(\W-(N-1)\right)\|h_{T_0} \|_1 + (1-w_1)\sum_{i=1}^N [\|h_{\widetilde{T}_i^c\cap T_0} \|_1 + \|h_{\widetilde{T}_i\cap T_0^c} \|_1] \notag\\
&\quad\quad + \sum_{j=2}^N(w_{j-1}-w_j)\sum_{i=j}^N[\|h_{\widetilde{T}_i^c\cap T_0} \|_1 + \|h_{\widetilde{T}_i\cap T_0^c} \|_1]  + 2D.
\end{align} 
Noting that $\|h_{\widetilde{T}_i^c\cap T_0} \|_1 = \sum_{j=1,j\neq i}^N\|h_{\widetilde{T}_j\cap T_0} \|_1 + \|h_{T_0\cap \bigcap_{j=1}^N\widetilde{T}_j^c} \|_1$ and $\|h_{T_0} \|_1=\sum_{i=1}^N\|h_{\widetilde{T}_i\cap T_0} \|_1 + \|h_{T_0\cap \bigcap_{j=1}^N\widetilde{T}_j^c} \|_1$, and that we can write $\sum_{i=j}^N \|h_{\widetilde{T}_i\cap T_0^c} \|_1 + \|h_{T_0\cap \bigcap_{j=1}^N\widetilde{T}_j^c} \|_1 = \|h_{T_0\cup \bigcup_{i=j}^N \widetilde{T}_i \setminus \bigcup_{i=j}^N (\widetilde{T}_i\cap T_0)} \|_1$ for any $j=1,\dots,N$, the above can also be expressed as
\begin{align}\label{the above0}
\|h_{T_0^c} \|_1 &\leq  \left(\W -(N-1)\right)\|h_{T_0} \|_1 + (1-w_1)[(N-1)\|h_{T_0} \|_1 + \|h_{T_0\cup \bigcup_{i=1}^N \widetilde{T}_i \setminus \bigcup_{i=1}^N (\widetilde{T}_i\cap T_0)}\|_1] \notag\\
&\quad\quad + \sum_{j=2}^N(w_{j-1}-w_j)[(N-j)\|h_{T_0} \|_1 + \|h_{T_0\cup \bigcup_{i=j}^N \widetilde{T}_i \setminus \bigcup_{i=j}^N (\widetilde{T}_i\cap T_0)}\|_1]  + 2D.
\end{align} 
Combining all coefficients of $\|h_{T_0}\|_1$, we have
\begin{align}\label{the above}
\sum_{i=1}^N w_i - (&N-1) + (1-w_1)(N-1) + \sum_{j=2}^N (w_{j-1}-w_j)(N-j) \nonumber\\
&= \sum_{i=1}^N w_i  - (N-1)w_1 + (N-2)w_1 + \sum_{j=2}^{N-1} (N-(j+1))w_j - \sum_{j=2}^{N-1} (N-j)w_j \nonumber\\
&= \sum_{i=2}^N w_i  - \sum_{j=2}^{N-1} w_j \nonumber\\
& = w_N. 
\end{align}
Finally, combining \eqref{the above0} with \eqref{the above} we  arrive at (\ref{option4}).
\end{proof}

\begin{lemma}[Bounding the Tail] \label{lemma::bounding the tail}
We have the following bound on the tail of the error $h$,
\begin{align} \label{24}
\|h_{T_{01}^c} \|_2 &\leq (as)^{-1/2} \left[w_N\|h_{T_0} \|_1 + (1-w_1)\|h_{T_0\cup \bigcup_{i=1}^N \widetilde{T}_i \setminus \bigcup_{i=1}^N (\widetilde{T}_i\cap T_0)}\|_1 \right.\notag\\
&\quad\quad \left. + \sum_{j=2}^N(w_{j-1}-w_j)\|h_{T_0\cup \bigcup_{i=j}^N \widetilde{T}_i \setminus \bigcup_{i=j}^N (\widetilde{T}_i\cap T_0)}\|_1+ 2D \right].
\end{align}
\end{lemma}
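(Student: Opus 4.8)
The plan is to reduce the statement entirely to the cone constraint of Lemma~\ref{lemma::cone constraint} via the standard block-sorting estimate used in \cite{CandeRT_Stable,Saab_weightedL1RIP}. Recall from the proof setup that $T_0^c$ has been partitioned into disjoint blocks $T_1,T_2,\dots$, each of cardinality $as$, where $T_1$ collects the $as$ largest-in-magnitude entries of $h_{T_0^c}$, $T_2$ the next $as$ largest, and so on, and that $T_{01}=T_0\cup T_1$. Consequently $T_{01}^c=\bigcup_{j\geq 2}T_j$ and $h_{T_{01}^c}=\sum_{j\geq 2}h_{T_j}$, so the left-hand side of \eqref{24} is controlled blockwise.

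The key observation is the per-block bound $\|h_{T_j}\|_2 \leq (as)^{-1/2}\|h_{T_{j-1}}\|_1$ for every $j\geq 2$. Indeed, by the sorting every coordinate magnitude on $T_j$ is at most every coordinate magnitude on $T_{j-1}$, hence at most the average magnitude on $T_{j-1}$, which equals $(as)^{-1}\|h_{T_{j-1}}\|_1$; since $|T_j|=as$, taking the $\ell_2$ norm over $T_j$ yields the claimed estimate. First I would establish this inequality, then apply the triangle inequality to write
\[
\|h_{T_{01}^c}\|_2 \leq \sum_{j\geq 2}\|h_{T_j}\|_2 \leq (as)^{-1/2}\sum_{j\geq 2}\|h_{T_{j-1}}\|_1 = (as)^{-1/2}\sum_{k\geq 1}\|h_{T_k}\|_1 = (as)^{-1/2}\|h_{T_0^c}\|_1,
\]
where the final equality uses that the blocks $T_k$, $k\geq 1$, partition $T_0^c$.

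Finally I would invoke Lemma~\ref{lemma::cone constraint} to substitute its cone-constraint bound on $\|h_{T_0^c}\|_1$ into the right-hand side above; this replaces $\|h_{T_0^c}\|_1$ by exactly the bracketed expression in \eqref{24}, completing the proof. I expect essentially no serious obstacle here: the entire analytic content was already absorbed in proving the cone constraint, and the present lemma is a routine application of the classical averaging tail bound. The only points requiring mild care are the index shift $j\mapsto j-1$ in the telescoping sum, so that $\sum_{j\geq 2}\|h_{T_{j-1}}\|_1$ is correctly re-indexed to $\sum_{k\geq 1}\|h_{T_k}\|_1=\|h_{T_0^c}\|_1$, and verifying that the sorting genuinely gives the coordinatewise domination used in the per-block bound.
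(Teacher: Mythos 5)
Your proposal is correct and follows essentially the same route as the paper's own proof: the per-block averaging bound $\|h_{T_j}\|_2 \leq (as)^{-1/2}\|h_{T_{j-1}}\|_1$, the triangle inequality with reindexing to get $\|h_{T_{01}^c}\|_2 \leq (as)^{-1/2}\|h_{T_0^c}\|_1$, and then substitution of the cone constraint from Lemma~\ref{lemma::cone constraint}. No gaps; this matches the paper step for step.
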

\begin{proof}
As in \cite{CandeRT_Stable}, note that
\begin{align}\label{lemma1::eqn1}
\|h_{T_j}\|_2 \leq \sqrt{as}\|h_{T_j} \|_{\infty} \leq (as)^{-1/2}\|h_{T_{j-1}} \|_1,
\end{align}
where we have observed that (by construction) the average of the terms (in magnitude) on $h_{T_{j-1}}$ must be at least as large as $\|h_{T_j} \|_{\infty}$.
Then noting that $h_{T_{0}^c} = \sum_{j\geq 1} h_{T_j}$ and  $h_{T_{01}^c} = \sum_{j\geq 2} h_{T_j}$ and using the triangle inequality along with (\ref{lemma1::eqn1}), we have
\begin{align} \label{23}
\|h_{T_{01}^c} \|_2 &\leq \sum_{j\geq 2} \| h_{T_j} \|_2  \notag\\
&\leq (as)^{-1/2}\sum_{j\geq 1} \| h_{T_j} \|_1 \notag\\
&\leq (as)^{-1/2}\|h_{T_0^c} \|_1.
\end{align}
Combining (\ref{23}) with (\ref{option4}) gives (\ref{24}).
\end{proof}

\begin{lemma}[Consequence of the RIP] \label{lemma::consequence of the rip}
Define
\begin{align}\label{def::K}
K_N &= w_N + (1-w_1)\sqrt{1 + \sum_{i=1}^N (\rho_i - 2\alpha_i\rho_i)} + \sum_{j=2}^N \left((w_{j-1}-w_j)\sqrt{1 + \sum_{i=j}^N(\rho_i-2\alpha_i\rho_i)}\right).
\end{align}
Then the following inequality holds,
\begin{align} \label{25}
\|h_{T_{01}}\|_2 \leq \frac{2\epsilon + \frac{2\sqrt{1+\delta_{as}}}{\sqrt{as}}D}{\sqrt{1-\delta_{(a+1)s}} - \frac{\sqrt{1+\delta_{as}}}{\sqrt{a}}K_N},
\end{align}
when the denominator is positive.
\end{lemma}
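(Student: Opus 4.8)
The plan is to turn the RIP into a self-bounding inequality for $\|h_{T_{01}}\|_2$ and then eliminate the $\ell_1$ tail using Lemma~\ref{lemma::cone constraint} and Lemma~\ref{lemma::bounding the tail}. Since $T_{01}=T_0\cup T_1$ has cardinality $(a+1)s$, the lower RIP bound gives $(1-\delta_{(a+1)s})\|h_{T_{01}}\|_2^2\le\|Ah_{T_{01}}\|_2^2$. I would then expand, using $h_{T_{01}^c}=\sum_{j\ge 2}h_{T_j}$, as $\|Ah_{T_{01}}\|_2^2=\langle Ah_{T_{01}},Ah\rangle-\sum_{j\ge 2}\langle Ah_{T_{01}},Ah_{T_j}\rangle$, and bound the two pieces separately.

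For the first inner product, feasibility of both $\hat x$ and $x$ in \eqref{eqn::WL1 Nsupp} (note $\|Ax-y\|_2=\|z\|_2\le\epsilon$) gives $\|Ah\|_2\le 2\epsilon$, so Cauchy--Schwarz yields $\langle Ah_{T_{01}},Ah\rangle\le 2\epsilon\,\|Ah_{T_{01}}\|_2$. For the cross terms I would apply Cauchy--Schwarz to each summand together with the RIP upper bound $\|Ah_{T_j}\|_2\le\sqrt{1+\delta_{as}}\,\|h_{T_j}\|_2$ (each $T_j$ has size $as$), obtaining $\sum_{j\ge 2}|\langle Ah_{T_{01}},Ah_{T_j}\rangle|\le\sqrt{1+\delta_{as}}\,\|Ah_{T_{01}}\|_2\sum_{j\ge 2}\|h_{T_j}\|_2$. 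Dividing through by $\|Ah_{T_{01}}\|_2$, invoking the lower RIP bound a second time, and inserting the sorting estimate $\sum_{j\ge 2}\|h_{T_j}\|_2\le(as)^{-1/2}\|h_{T_0^c}\|_1$ from \eqref{23} reduces everything to the inequality $\sqrt{1-\delta_{(a+1)s}}\,\|h_{T_{01}}\|_2\le 2\epsilon+\sqrt{1+\delta_{as}}\,(as)^{-1/2}\|h_{T_0^c}\|_1$.

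It then remains to feed the cone constraint \eqref{option4} into this inequality and convert each $\ell_1$ term into an $\ell_2$ term controlled by $\|h_{T_{01}}\|_2$. Writing $S_j:=T_0\cup\bigcup_{i=j}^N\widetilde{T}_i\setminus\bigcup_{i=j}^N(\widetilde{T}_i\cap T_0)$, disjointness of the $\widetilde{T}_i$ gives $S_j=\bigl(T_0\cap(\bigcup_{i\ge j}\widetilde{T}_i)^c\bigr)\cup\bigcup_{i\ge j}(\widetilde{T}_i\cap T_0^c)$, whose cardinality is exactly $s\bigl(1+\sum_{i=j}^N(\rho_i-2\alpha_i\rho_i)\bigr)$; this cardinality computation is what produces the expressions under the square roots in $K_N$. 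Each $\ell_1$ norm is then controlled by Cauchy--Schwarz, $\|h_{T_0}\|_1\le\sqrt s\,\|h_{T_0}\|_2$ and $\|h_{S_j}\|_1\le\sqrt{|S_j|}\,\|h_{S_j}\|_2$.

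The crux, which I expect to be the main obstacle, is showing $\|h_{S_j}\|_2\le\|h_{T_{01}}\|_2$, so that every term collapses onto the single quantity $\|h_{T_{01}}\|_2$. I would split $\|h_{S_j}\|_2^2=\|h_{S_j\cap T_0}\|_2^2+\|h_{S_j\cap T_0^c}\|_2^2$; the first piece is at most $\|h_{T_0}\|_2^2$, and for the second I would use that $S_j\cap T_0^c=\bigcup_{i\ge j}(\widetilde{T}_i\cap T_0^c)$ has cardinality $\sum_{i\ge j}(1-\alpha_i)\rho_i s\le as=|T_1|$ by the hypothesis $\sum_i\rho_i(1-\alpha_i)\le a$, together with the fact that $T_1$ indexes the $as$ largest-in-magnitude coefficients of $h$ on $T_0^c$; hence $\|h_{S_j\cap T_0^c}\|_2\le\|h_{T_1}\|_2$ and therefore $\|h_{S_j}\|_2\le\|h_{T_{01}}\|_2$. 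Substituting these bounds into \eqref{option4} makes the weighted combination of square roots collapse precisely into $\sqrt s\,K_N\|h_{T_{01}}\|_2$, i.e. $\|h_{T_0^c}\|_1\le\sqrt s\,K_N\|h_{T_{01}}\|_2+2D$. Plugging this into the inequality from the previous paragraph and solving the resulting linear inequality for $\|h_{T_{01}}\|_2$ — which is legitimate exactly when the denominator $\sqrt{1-\delta_{(a+1)s}}-\tfrac{\sqrt{1+\delta_{as}}}{\sqrt a}K_N$ is positive, equivalently under \eqref{WL1 Nsupp RIP condition} — yields the claimed bound \eqref{25}.
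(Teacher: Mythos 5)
Your proposal is correct and takes essentially the same approach as the paper's proof: it arrives at the same intermediate inequality $\sqrt{1-\delta_{(a+1)s}}\,\|h_{T_{01}}\|_2\le 2\epsilon+\sqrt{1+\delta_{as}}\,(as)^{-1/2}\|h_{T_0^c}\|_1$, performs the same cardinality computation $|S_j|=s\bigl(1+\sum_{i=j}^N(\rho_i-2\alpha_i\rho_i)\bigr)$, establishes the same crucial comparison $\|h_{S_j}\|_2\le\|h_{T_{01}}\|_2$ using the hypothesis $\sum_i\rho_i(1-\alpha_i)\le a$ and the fact that $T_1$ holds the $as$ largest entries of $h_{T_0^c}$, and solves the same linear inequality under positivity of the denominator. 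The only cosmetic difference is the opening: you expand $\|Ah_{T_{01}}\|_2^2$ via inner products and Cauchy--Schwarz, whereas the paper gets the identical bound directly from the triangle inequality $\|Ah_{T_{01}}\|_2\le\|Ah\|_2+\|Ah_{T_{01}^c}\|_2$.
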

\begin{proof}
Again, following \cite{CandeRT_Stable}, and noting that since $A$ satisfies the RIP and $\|Ah\|_2 \leq 2\epsilon$ due to the feasibility of both $x$ and $\hat{x}$, we have
\begin{align*}
\sqrt{1-\delta_{(a+1)s}}\|h_{T_{01}} \|_2 &\leq \|Ah_{T_{01}} \|_2 \\
&= \|Ah_{T_{01}} + Ah_{T_{01}^c}- Ah_{T_{01}^c}\|_2\\
&\leq \| Ah\|_2 + \|Ah_{T_{01}^c}\|_2\\
&= 2\epsilon +  \|\sum_{j\geq 2} Ah_{T_j}\|_2\\
&\leq 2\epsilon + \sum_{j\geq 2} \|Ah_{T_j} \|_2\\
&\leq 2\epsilon + \sqrt{1+\delta_{as}}\sum_{j\geq2}\|h_{T_j} \|_2 \\
&\leq 2\epsilon + \frac{\sqrt{1+\delta_{as}}}{\sqrt{as}}\|h_{T_{0}^c} \|_1,
\end{align*}
where the last inequality follows from (\ref{23}).
Then applying (\ref{option4}), we get
\begin{align*}
&\sqrt{1-\delta_{(a+1)s}}\|h_{T_{01}} \|_2 \leq 2\epsilon + 2\frac{\sqrt{1+\delta_{as}}}{\sqrt{as}}D \\
&\quad\quad + \frac{\sqrt{1+\delta_{as}}}{\sqrt{as}}\left[w_N\|h_{T_0} \|_1 + (1-w_1)\|h_{T_0\cup \bigcup_{i=1}^N \widetilde{T}_i \setminus \bigcup_{i=1}^N (\widetilde{T}_i\cap T_0)}\|_1 + \sum_{j=2}^N(w_{j-1}-w_j)\|h_{T_0\cup \bigcup_{i=j}^N \widetilde{T}_i \setminus \bigcup_{i=j}^N (\widetilde{T}_i\cap T_0)}\|_1\right]. 
\end{align*}

Note that $|T_0\cup \bigcup_{i=j}^N \widetilde{T}_j \setminus \bigcup_{i=j}^N (\widetilde{T}_i\cap T_0)| = s + \sum_{i=j}^N (\rho_i s - 2\alpha_i\rho_i s)$ for any $j=1,\dots,N$. Thus, we have
$$ \|h_{T_0\cup \bigcup_{i=j}^N \widetilde{T}_i \setminus \bigcup_{i=j}^N (\widetilde{T}_i\cap T_0)}\|_1 \leq \sqrt{s + s\sum_{i=j}^N (\rho_i-2\alpha_i\rho_i)} \|h_{T_0\cup \bigcup_{i=j}^N \widetilde{T}_i \setminus \bigcup_{i=j}^N (\widetilde{T}_i\cap T_0)}\|_2 $$
for any $j=1,\dots,N$. 
Now, since $T_1$ contains the largest $as$ coefficients of $h_{T_0^c}$ with $a>1$, and $|\bigcup_{i=j}^N \widetilde{T}_i \setminus \bigcup_{i=j}^N (\widetilde{T}_i\cap T_0)| = \sum_{i=j}^N (\rho_i s - \alpha_i\rho_i s) = s\sum_{i=1}^N \rho_i(1-\alpha_i)$, then $\|h_{T_0\cup \bigcup_{i=j}^N \widetilde{T}_i \setminus \bigcup_{i=j}^N (\widetilde{T}_i\cap T_0)} \|_2 \leq \|h_{T_{01}} \|_2$ for any $j=1,\dots,N$ as long as we require $\sum_{i=1}^N \rho_i(1-\alpha_i) \leq a$. Note that since $\rho_i\leq a_i$ for $i=1,\dots,N$, $\sum_{i=1}^N \rho_i(1-\alpha_i)\leq \sum_{i=1}^N a_i(1-\alpha_i) \leq \sum_{i=1}^N a_i \leq a$ as long as $\alpha_i \geq 0$ for $i=1,\dots,N$. Consequently, we can bound
$$ \|h_{T_0\cup \bigcup_{i=j}^N \widetilde{T}_i \setminus \bigcup_{i=j}^N (\widetilde{T}_i\cap T_0)}\|_2 \leq \|h_{T_{01}} \|_2$$
for any $j=1,\dots,N$. Also, $\| h_{T_0} \|_1 \leq \sqrt{s}\|h_{T_0} \|_2 \leq \sqrt{s}\|h_{T_{01}} \|_2$.
Therefore, we have
\begin{align}\label{l2bound}
&\sqrt{1-\delta_{(a+1)s}}\|h_{T_{01}} \|_2 \leq  2\epsilon + 2\frac{\sqrt{1+\delta_{as}}}{\sqrt{as}}D \notag\\
&\quad\quad + \frac{\sqrt{1+\delta_{as}}}{\sqrt{a}}\|h_{T_{01}} \|_2\left[ w_N + (1-w_1)\sqrt{1 + \sum_{i=1}^N (\rho_i - 2\alpha_i\rho_i)} + \sum_{j=2}^N \left((w_{j-1}-w_j)\sqrt{1 + \sum_{i=j}^N(\rho_i-2\alpha_i\rho_i)}\right)\right].
\end{align}

Defining $K_N$ as in (\ref{def::K}) and solving (\ref{l2bound}) for $\|h_{T_{01}}\|_2$ gives (\ref{25}), where we assume the denominator is positive. 
\end{proof}

$\;$\\
\indent{\bfseries Putting it all together.}  Finally, using that $\| h\|_2 = \|h_{T_{01}}  + h_{T_{01}^c} \|_2 \leq \|h_{T_{01}} \|_2 + \|h_{T_{01}^c} \|_2$, along with (\ref{24}), (\ref{25}) and the arguments leading to (\ref{l2bound}), we get
\begin{align} \label{finalbound}
\|h\|_2 &\leq \|h_{T_{01}} \|_2 + \frac{1}{\sqrt{as}} \left[w_N\|h_{T_0} \|_1 + (1-w_1)\|h_{T_0\cup \bigcup_{i=1}^N \widetilde{T}_i \setminus \bigcup_{i=1}^N (\widetilde{T}_i\cap T_0)}\|_1 \right. \notag \\
&\quad \left. + \sum_{j=2}^N(w_{j-1}-w_j)\|h_{T_0\cup \bigcup_{i=j}^N \widetilde{T}_i \setminus \bigcup_{i=j}^N (\widetilde{T}_i\cap T_0)}\|_1+ 2D \right]\notag \\
&\leq  \|h_{T_{01}} \|_2 +  \frac{1}{\sqrt{as}} \left[\sqrt{s}K_N\|h_{T_{01}} \|_2+ 2D \right] \notag \\
&= \left(1+\frac{K_N}{\sqrt{a}}\right)\|h_{T_{01}} \|_2 + \frac{2}{\sqrt{as}}D \notag \\
&\leq  \left(1+\frac{K_N}{\sqrt{a}}\right) \left( \frac{2\epsilon + \frac{2\sqrt{1+\delta_{as}}}{\sqrt{as}}D}{\sqrt{1-\delta_{(a+1)s}} - \frac{\sqrt{1+\delta_{as}}}{\sqrt{a}}K_N} \right) + \frac{2}{\sqrt{as}}D \notag \\
&= \frac{(1+\frac{K_N}{\sqrt{a}})2\epsilon + \frac{2}{\sqrt{as}}(\sqrt{1+\delta_{as}} + \sqrt{1-\delta_{(a+1)s}})D}{\sqrt{1-\delta_{(a+1)s}} - \frac{\sqrt{1+\delta_{as}}}{\sqrt{a}}K_N},
\end{align}
with the condition that the denominator is positive. That is, we require
\begin{align*}
\sqrt{1-\delta_{(a+1)s}} - \frac{\sqrt{1+\delta_{as}}}{\sqrt{a}}K_N > 0,
\end{align*}
which is equivalent to
\begin{align*}
\delta_{as} + \frac{a}{K_N^2} \delta_{(a+1)s}< \frac{a}{K_N^2}-1.
\end{align*}

\section{Numerical Experiments}\label{sec::weighted experiments}

In this section, we include numerical experiments of weighted $\ell_1$-minimization and demonstrate that non-uniform weights can be preferable to a uniform weight. We stress that the purpose of these experiments is to illustrate the potential benefits of the multi-weight setup, and not to investigate the associated application areas. We first present experiments for synthetically generated signals, and then present an example where we use weighted $\ell_1$-minimization to recover a compressively sampled video signal. 

\subsection{Synthetic Experiments}
For our first set of experiments, the signal $x$ is synthetically generated and the recovery error when using weighted $\ell_1$-minimization with non-uniform weights is compared to that when using a single constant weight. Here, the signal is of dimension $n=256$ and sparsity $s=|T_0| = |\supp(x)| = 16$ with standard Gaussian nonzero values on $T_0$, the measurement matrix is Gaussian, the measurement noise $z$ is i.i.d. Gaussian with mean zero and standard deviation 0.01, and 500 trials are performed at each measurement level tested. We compare the \edit{relative} recovery error $\|x-\hat{x} \|_2\edit{/\|x \|_2}$ when using either a single support estimate set $\widetilde{T}$ or two disjoint support estimate sets $\widetilde{T}_1$ and $\widetilde{T}_2$, such that $\widetilde{T} = \widetilde{T}_1\cup\widetilde{T}_2$ (this implies $\rho = \rho_1 + \rho_2$). 

First, we set $\alpha=\alpha_1=\alpha_2=1$, $\rho=1$, and vary the sizes of $\rho_1$ and $\rho_2$ while maintaining that $\rho_1+\rho_2=1$. 
We set $w_1=0.5$ (applied on $\widetilde{T}_1$) and $w_2=0.25$ (applied on $\widetilde{T}_2$) in the two support estimate setting, and compare the recovery error to the single support estimate case when $w=0.5$ or $w=0.25$ (applied on $\widetilde{T}$). Figure \ref{fig::weightedL1_TwoSupportEstimates} displays the mean \edit{relative} recovery error of these results as a function of the number of measurements acquired. As expected, since all support estimates are completely accurate (that is, all elements of $\widetilde{T}$, $\widetilde{T}_1$, and $\widetilde{T}_2$ are elements of $T_0$), setting all weights to the smallest value of $w=0.25$ performs the best while setting all weights to the largest value of $w=0.5$ performs the worst; using a combination of these two weights as $\rho_1$ and $\rho_2$ are varied produces intermediate performance. This behavior reflects empirically the theoretical result of Figure \ref{fig::compareRIPconstants}. 

Next, we set $\alpha=0.5$, $\rho=1$, $\rho_1=\rho_2=0.5$, and vary the sizes of $\alpha_1$ and $\alpha_2$ while maintaining that $\rho_1\alpha_1 + \rho_2\alpha_2 = \rho\alpha$ (with our choice of parameters, this means $\alpha_1 + \alpha_2 = 1$). We again set $w_1=0.5$ and $w_2=0.25$ in the two support estimate setting, and compare the \edit{relative} recovery error to the single support estimate case when $w=0.5$ or $w=0.25$. In Figure \ref{fig::weightedL1_TwoSupportEstimates}, we see that the best recovery is achieved when $\alpha_1=0$ and $\alpha_2=1$. This is again expected since $w_2=0.25$ is applied on $\widetilde{T}_2$ which contains the \textit{correctly} identified elements in $T_0$ and $w_1=0.5$ is applied on $\widetilde{T}_1$ which contains the \textit{incorrectly} estimated elements in $T_0^c$. As $\alpha_1$ increases to 1 and $\alpha_2$ decreases to 0 the \edit{relative} recovery error increases since fewer correctly identified elements in $T_0$ receive the smaller weight $w_2$, but rather the larger weight $w=0.5$. The recovery results when using a single constant weight of either $w=0.5$ or $w=0.25$ or two weights with $\alpha_1=\alpha_2=0.5$ all seem similar, however, there is a subtlety. The recovery tends to be \textit{slightly} better when the single weight $w=0.25$ is used than when $w=0.5$ is used, and the two weight result with $\alpha_1=\alpha_2=0.5$ seems to fall in between the $w=0.25$ and $w=0.5$ curves. Turning to the theory, when $\alpha = 0.5$ the quantity $b=1$ and when $\alpha_1=\alpha_2=0.5$ the quantity $K_2=1$, in which case $C_0=C_0'$ and $C_1=C_1'$ and the RIP condition (\ref{WL1 Nsupp RIP condition}) of Theorem \ref{thm::WL1 Nsupp} is identical to (\ref{WL1 RIP condition}) in Theorem \ref{thm::WL1}. Thus, the relationship between recovery error for these settings can be explained by the terms $w\|x-x_s \|_1 + (1-w)\|x_{\widetilde{T}^c\cap T_0^c} \|_1$ and 
$$ (w_1+w_2)\|x-x_s \|_1 + (1-w_1-w_2)\|x_{\widetilde{T}^c\cap T_0^c} \|_1 - w_2\|x_{\widetilde{T}_1\cap T_0^c} \|_1 - w_1\|x_{\widetilde{T}_2\cap T_0^c} \|_1 $$
from the error bounds in Theorems \ref{thm::WL1} and \ref{thm::WL1 Nsupp}, respectively. Specifically, since $\widetilde{T} = \widetilde{T}_1\cup\widetilde{T}_2$, when $w=0.25$, $w_1=0.5$, $w_2=0.25$, and noting that $\|x-x_s \|_1 = \|x_{T_0^c} \|_1$ and $\|x_{\widetilde{T}^c\cap T_0^c} \|_1 \leq \|x_{T_0^c} \|_1$, we have
$$ 0.25\|x_{T_0^c} \|_1 + 0.75\|x_{\widetilde{T}^c\cap T_0^c} \|_1 \leq 0.75\|x_{T_0^c} \|_1 + 0.25\|x_{\widetilde{T}^c\cap T_0^c} \|_1 - 0.25\|x_{\widetilde{T}_1\cap T_0^c} \|_1 - 0.5\|x_{\widetilde{T}_2\cap T_0^c} \|_1. $$
Similarly, when $w=0.5$, we have
$$ 0.75\|x_{T_0^c} \|_1 + 0.25\|x_{\widetilde{T}^c\cap T_0^c} \|_1 - 0.25\|x_{\widetilde{T}_1\cap T_0^c} \|_1 - 0.5\|x_{\widetilde{T}_2\cap T_0^c} \|_1  \leq 0.5\|x_{T_0^c} \|_1 + 0.5\|x_{\widetilde{T}^c\cap T_0^c} \|_1. $$
Although satisfying the above relationships, these terms are quite close, which is reflected in the closeness of the curves in Figure \ref{fig::weightedL1_TwoSupportEstimates}.

\begin{figure}[!htbp]
\centering
\begin{tabular}{cc}

\includegraphics[height=1.93in,width=2.55in]{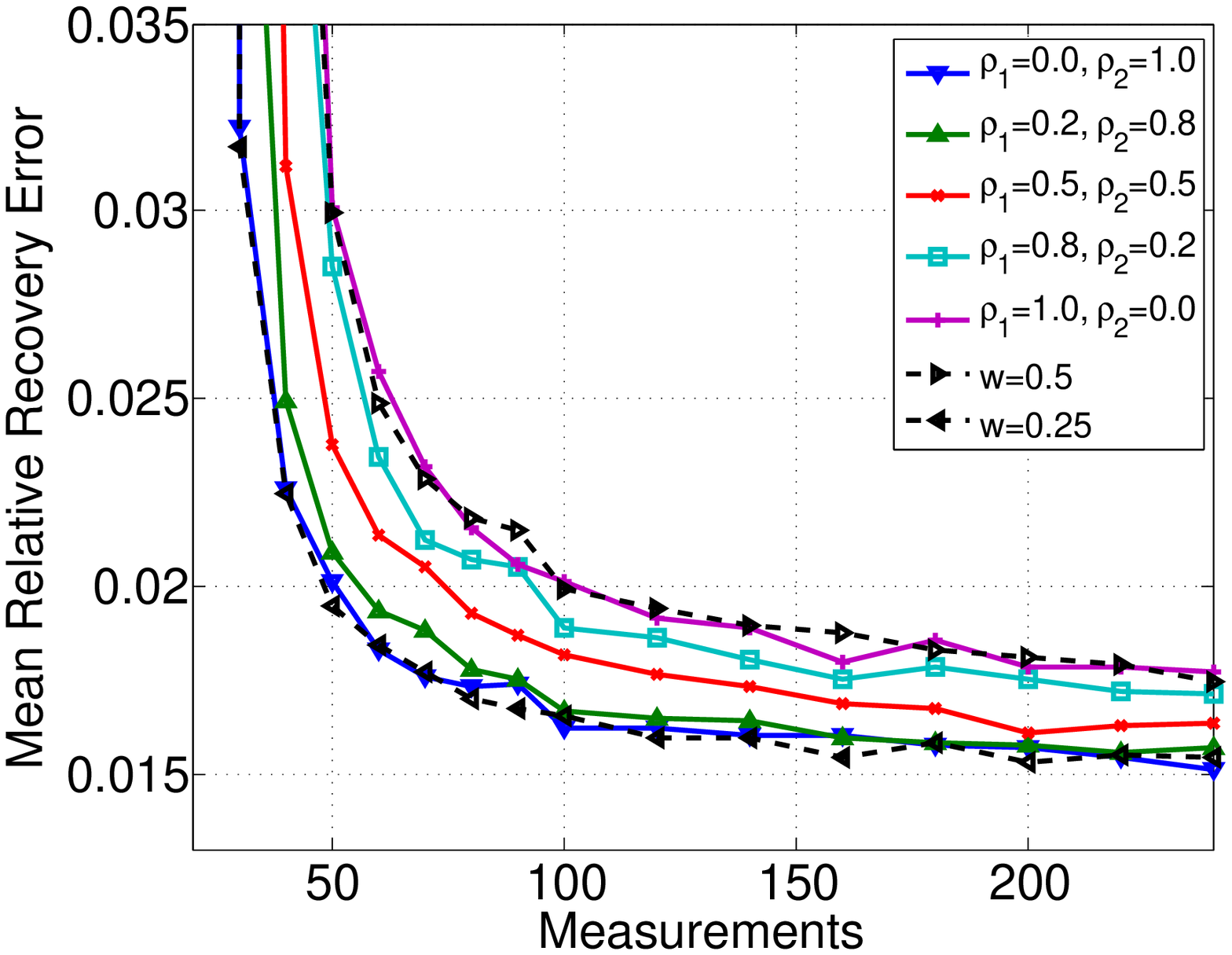} &
\includegraphics[height=1.93in,width=2.55in]{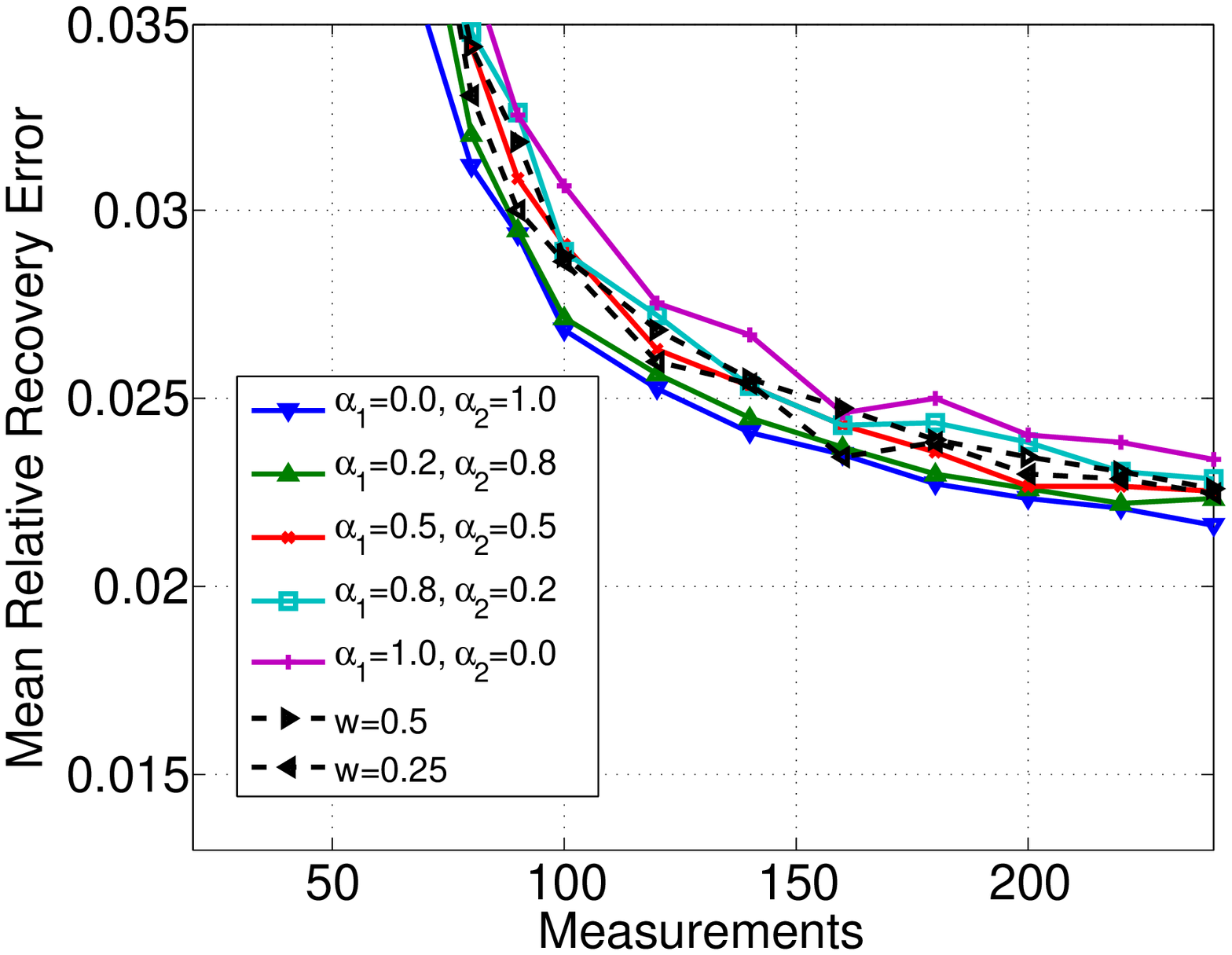} \\
(a) $\alpha=\alpha_1=\alpha_2=1, \rho_1+\rho_2=\rho=1$ & (b) $\alpha = 0.5, \rho = 1, \rho_1=\rho_2 = 0.5$
\end{tabular}
\caption{Comparison of the mean \edit{relative} recovery error over 500 trials versus the number of measurements taken while using weighted $\ell_1$-minimization with a single support estimate and a single weight (black dashed) and two disjoint support estimates with two distinct weights (solid, marker as indicated). We set $w_1=0.5$, $w_2=0.25$ and (a) vary the sizes of $\rho_1$ and $\rho_2$ for fixed $\rho$, $\alpha$, $\alpha_1$, and $\alpha_2$ and (b) vary the sizes of $\alpha_1$ and $\alpha_2$ for fixed $\alpha$, $\rho$, $\rho_1$, and $\rho_2$. }
\label{fig::weightedL1_TwoSupportEstimates}
\end{figure}

In our next next set of experiments, the signal $x$ is synthetically generated with a given signal distribution (i.e., with a specified probability of each entry being nonzero). Again, the signal is of dimension $n=256$ with standard Gaussian nonzero values, the measurement matrix is Gaussian, the measurement noise $z$ is i.i.d. Gaussian with mean zero and standard deviation 0.01, and \edit{100} trials are performed. The optimal choice of weight given the signal distribution is not obvious, and the determination of the optimal relationship is beyond the scope of this paper. We find, however, that the following method empirically performs well for the signal models tested. Let $P_i$ denote the probability that entry $i$ in $x$ is nonzero for $i=1,\dots,n$. Then, we take the weights to be\footnote{The relationship $w_i = 1-P_i$ may seem more natural, however, we found that the more aggressive relationship between $w_i$ and $P_i$ as implemented in the experiments provided superior performance. 
} $w_i = \frac{e^{-5P_i}-e^{-5}}{1-e^{-5}}$ (see Figure \ref{fig::exponential weight}). For comparison with weighted $\ell_1$-minimization with a single weight, all indices $i$ with $P_i > 0$ are assigned the same weight $w$.

Figure \ref{fig::synthetic experiments power} displays the recovery results for signals with a power law distribution. That is, $P_i = \frac{1}{i}$ for $i=1,\dots,n$. However, for any $i$ such that $\frac{1}{i}< 0.025$, we set $P_i=0$ so that the same weight is not applied on \textit{all} indices in the single weight case (if the same weight were applied on all indices, the result would be the same as the standard un-weighted $\ell_1$-minimization in (\ref{L1 min})). 
The probabilities $P_i$ and the non-uniform weights $w_i$ are shown in Figure \ref{fig::synthetic experiments power}. Over all trials, the average signal sparsity was 4.\edit{25}. We compare the non-uniform weight approach to using a uniform weight of $w \in \{0.1, 0.3, 0.5, 0.7, 0.9,1\}$ (note that $w=1$ is equivalent to solving (\ref{L1 min})). We see that although the \edit{relative} recovery error tends to decrease as the weight $w$ decreases in the single weight setting, the non-uniform weight approach outperforms all the others.

Figure \ref{fig::synthetic experiments tree v2} displays the recovery results for signals with a binary tree distribution. That is, the support is organized on a binary tree (plus an extra root node at the top). The first index has just one child; the second and further indices have two children each. An $s$-sparse support is filled by choosing the first index location, and then in each of the $s-1$ remaining rounds, choosing one index randomly among the unselected indices which currently have a selected parent. This type of model is characteristic of natural images (see \cite{DuartWB_Fast,CrousNB_Wavelet} for similar constructions). The probabilities $P_i$ for such a model were calculated experimentally over 10,000 trials of the described tree-sparse support generation, where the sparsity $s=24$ and the dimension $n=256$. Again, for any support index $i$ such that the experimentally calculated probability $P_i<0.025$, we set $P_i=0$. The probabilities $P_i$ and the non-uniform weights $w_i$ are shown in Figure \ref{fig::synthetic experiments tree v2}\footnote{Note that a tree-sparse support is generated using the described method. This means, although unlikely, it is possible for a support to contain indices such that $P_i$ has been set to zero.}.  We again compare the non-uniform weight approach to using a uniform weight of $w \in \{0.1, 0.3, 0.5, 0.7, 0.9,1\}$, and see that the non-uniform weight approach is outperforming all the others. Interestingly, in the single weight setting, the \edit{relative} recovery error decreases as the weight decreases from $w=1$ to $w=0.5$, but then increases for $w=0.3$ and increases even further for $w=0.1$. This illustrates that being overly aggressive in the weight assignment can worsen the reconstruction performance.

\begin{figure}[~htbp]
\centering
\includegraphics[height=1.7in,width=2.1in]{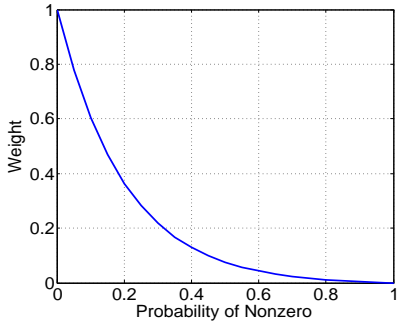} 
\caption{Relationship between probability $P_i$ of  an index being nonzero and non-uniform weight $w_i$ used in our numerical experiments: $w_i = \frac{e^{-5P_i}-e^{-5}}{1-e^{-5}}$. }
\label{fig::exponential weight}
\end{figure}

\begin{figure}[!htbp]
\centering
\begin{tabular}{cc}
\includegraphics[height=2.15in,width=1.85in]{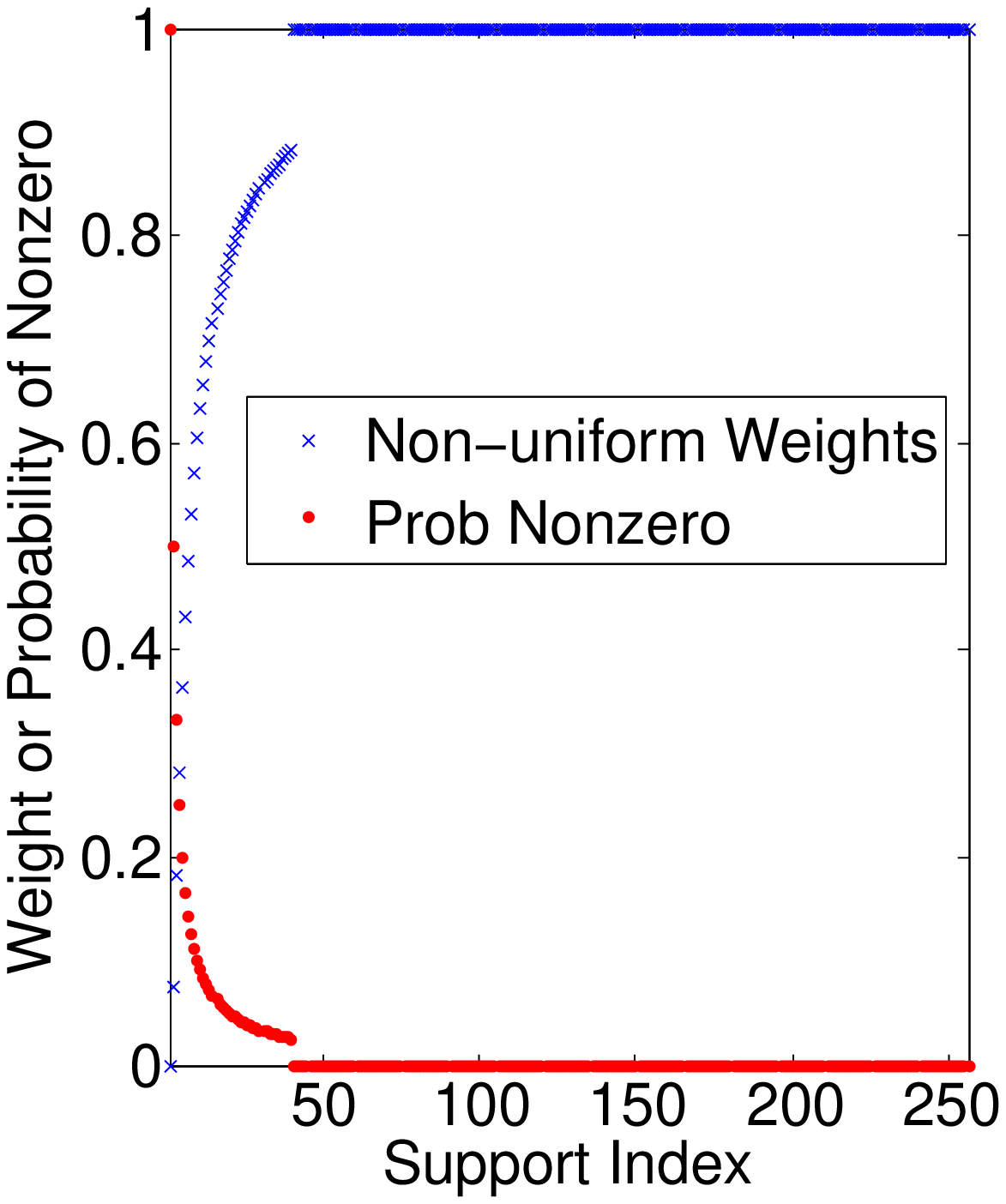} &
\includegraphics[height=2.15in,width=3.25in]{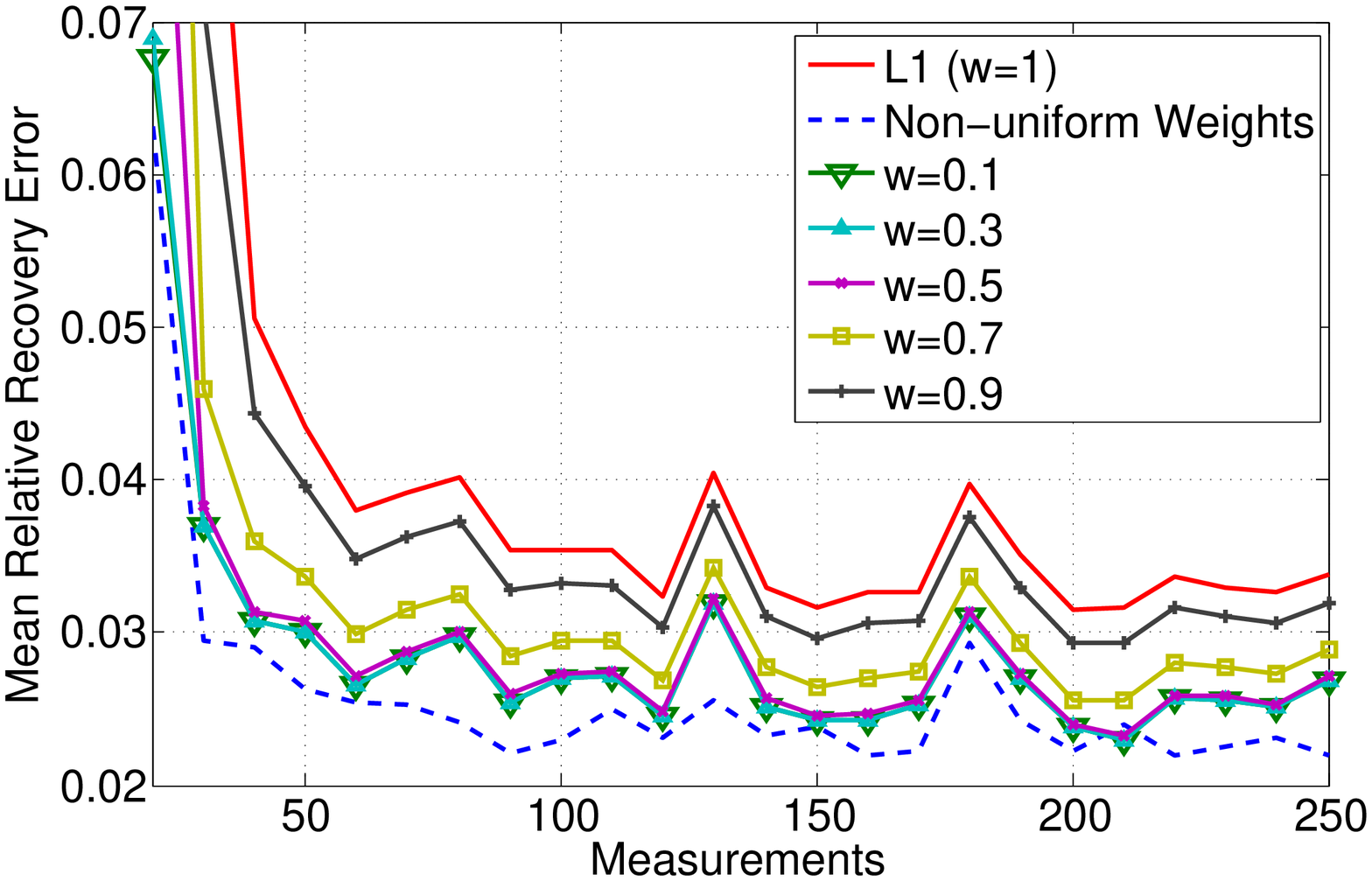} \\
(a) Power Law Distribution & (b) Power Law Recovery
\end{tabular}
\caption{(a) The probability of an index being nonzero (red dot) and the weight assigned (blue x) in the weighted $\ell_1$-minimization with non-uniform weights, where $w_i = \frac{e^{-5P_i}-e^{-5}}{1-e^{-5}}$, versus the support index. (b) Mean \edit{relative} recovery error over \edit{100} trials versus the number of measurements taken for signals with a power law distribution. We compare the mean \edit{relative} recovery error for weighted $\ell_1$-minimization with non-uniform weights (blue dashed), standard $\ell_1$-minimization (red solid), and weighted $\ell_1$-minimization with a single weight (marker as indicated). 
}
\label{fig::synthetic experiments power}
\end{figure}

\begin{figure}[!htbp]
\centering
\begin{tabular}{cc}
\includegraphics[height=2.15in,width=1.85in]{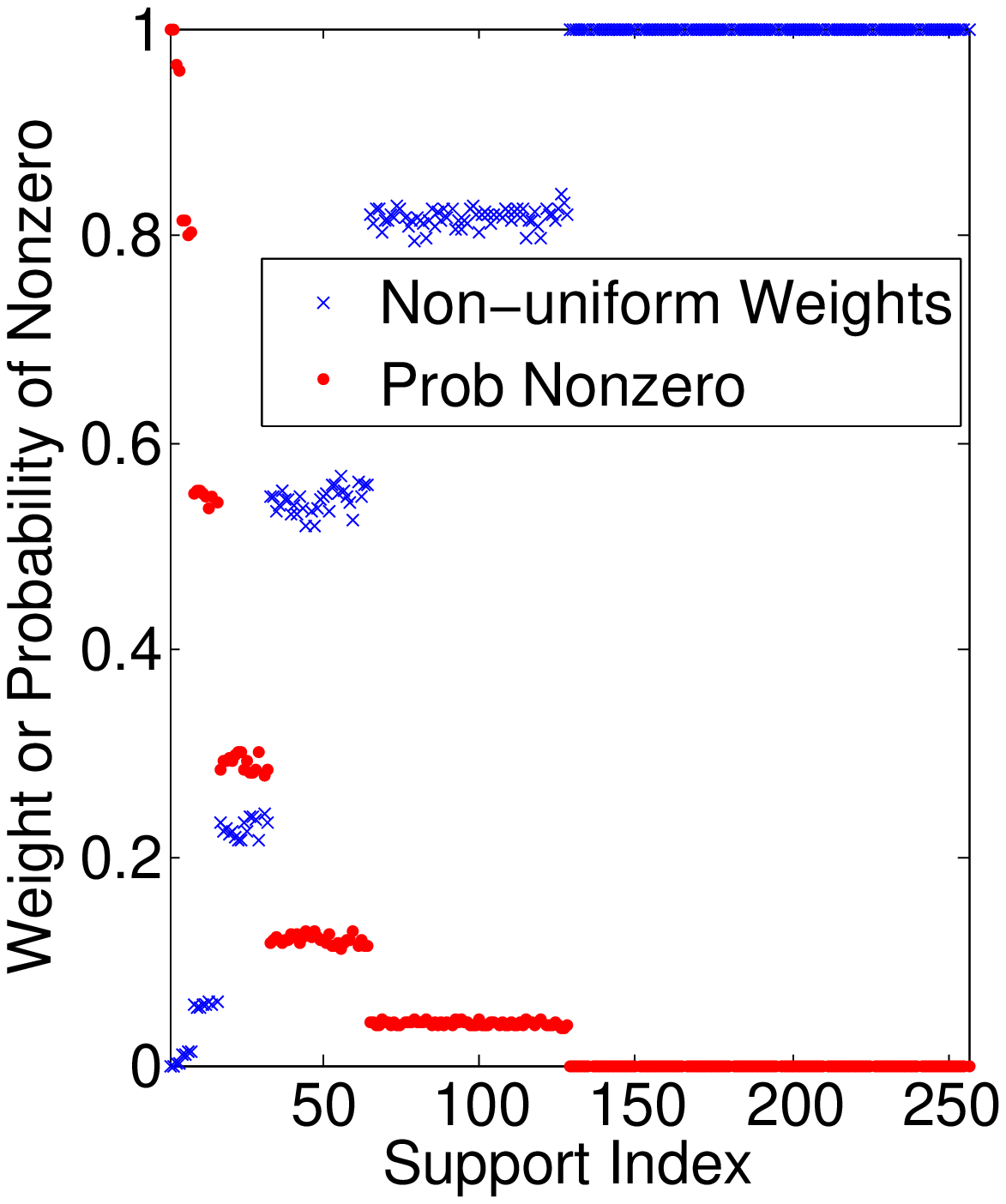} &

\includegraphics[height=2.15in,width=3.25in]{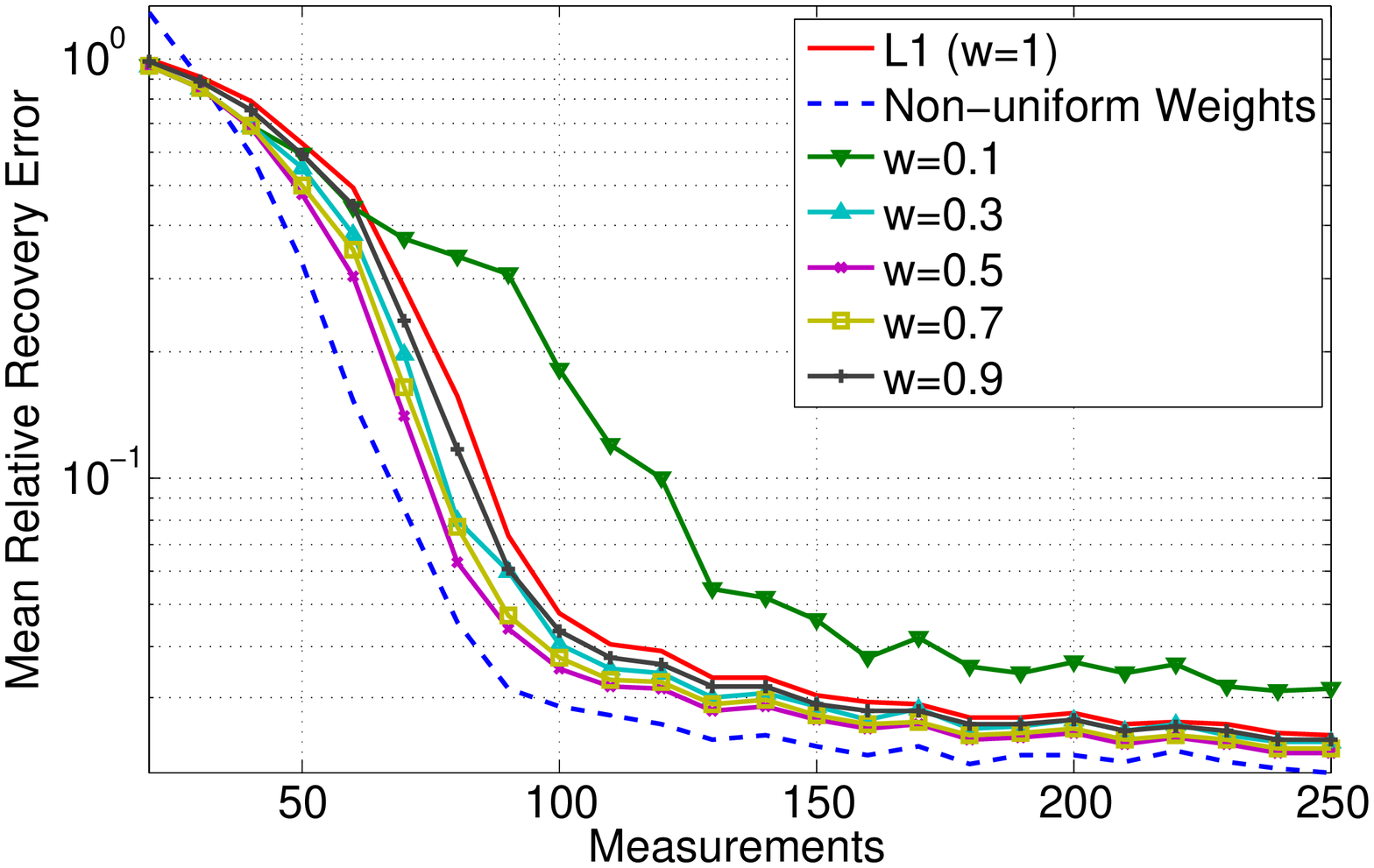} \\
(a) Binary Tree Distribution & (b) Binary Tree Recovery
\end{tabular}
\caption{(a) The probability of an index being nonzero (red dot) and the weight assigned (blue x) in the weighted $\ell_1$-minimization with non-uniform weights, where $w_i = \frac{e^{-5P_i}-e^{-5}}{1-e^{-5}}$, versus the support index. (b) Mean \edit{relative} recovery error over \edit{100} trials versus the number of measurements taken for signals following a binary tree sparsity pattern. We compare the mean \edit{relative} recovery error for weighted $\ell_1$-minimization with non-uniform weights (blue dashed), standard $\ell_1$-minimization (red solid), and weighted $\ell_1$-minimization with a single weight (marker as indicated). 
}
\label{fig::synthetic experiments tree v2}
\end{figure}

\subsection{Application to the Recovery of Video Signals}

One important application where a prior support, or even a prior distribution, can be reasonably estimated is the recovery of video signals since there is often little variation from one frame to the next. In this section, we perform a similar video recovery experiment as in \cite{Saab_weightedL1RIP}, but we also include weighted $\ell_1$-minimization recovery options where the weight does not need to be constant across the entire support estimate. As in \cite{Saab_weightedL1RIP}, we utilize the Foreman sequence at QCIF resolution (i.e., each frame contains $144\times 176$ pixels), and consider only the luma (grayscale) component of the sequence. We split the frames into four blocks  of size $72\times 88$, each of which are processed independently. The measurement matrix for each frame is $A = R D$, where $R$ is an $m \times n$ restriction matrix (i.e., a matrix with $m$ rows from the $n\times n$ identity matrix, selected uniformly at random) with $n=72\times88$, and $D$ is the two-dimensional Discrete Cosine Transform (DCT) (i.e., $D$ is the sparsifying basis). In our experiment, we set $m = \frac{n}{2} = 3168$.

We perform the reconstruction of each frame block using standard $\ell_1$-minimization, weighted $\ell_1$-minimization with a constant weight, and weighted-$\ell_1$ minimization with non-uniform weights. For weighted $\ell_1$-minimization with a constant weight $w$, standard $\ell_1$-minimization is used for the first frame. Then, at frame $j$ for $j\geq 2$ we determine the top 10\% of the DCT coefficients (in magnitude) of the previously recovered frame. Denoting the set of such DCT coefficients by $\widetilde{T}_j$, we set $\widetilde{T} = \bigcup_{i=2}^j \widetilde{T}_j$, and use a constant weight of $w$ on $\widetilde{T}$ in the recovery of frame $j$. Thus, the size of the support estimate $\widetilde{T}$ either grows or remains constant from frame to frame. For weighted-$\ell_1$ minimization with non-uniform weights we follow a similar procedure, but construct an estimated probability of a given DCT coefficient to be in the top 10\% of the DCT coefficients (in magnitude). That is, at frame $j$ for $j\geq 2$ we determine the top 10\% of the DCT coefficients (in magnitude) of the previously recovered frame and set $\widehat{P}_i = \frac{\# \mbox{ of times in top 10\%}}{j-1}$ for each DCT coefficient $i$. Then the weights are taken to be $w_i = \frac{e^{-5\widehat{P}_i}-e^{-5}}{1-e^{-5}}$ (note that for this application, setting $w_i = 1-\widehat{P}_i$ also performed well, but the included relationship between $\widehat{P}_i$ and $w_i$ is even more advantageous). As a final oracle-type comparison, we also include reconstruction results for weighted $\ell_1$-minimization with non-uniform weights when the \textit{true empirical} coefficient probabilities were calculated. For each true frame block, we determine the top 10\% of the DCT coefficients (in magnitude). The empirical probability that coefficient $i$ is nonzero is calculated as $P_i = \frac{\# \mbox{ of times in top 10\%}}{300}$ (note that the Foreman sequence has 300 frames). Then the weights are taken to be $w_i = \frac{e^{-5P_i}-e^{-5}}{1-e^{-5}}$ for \textit{every} frame.  

The reconstruction quality is reported in terms of the peak signal to noise ratio (PSNR) given by the expression
\begin{align}
\mbox{PSNR}(x,\hat{x}) = 10\log_{10} \left(\frac{N\times 255^2}{\|x-\hat{x} \|_2^2} \right),
\end{align}
where $x$ and $\hat{x}$ are the true and estimated full frames, respectively, expressed as a column vector\edit{, and $N$ is the number of pixels in each frame}.
Figure \ref{fig::foreman DCT FINAL} displays the recovery of all 300 frames of the Foreman sequence using weighted $\ell_1$-minimization with both uniform and non-uniform weighting strategies, as well as standard $\ell_1$-minimization. The results demonstrate that a dramatic improvement in PSNR can be achieved when using non-uniform weights, with the recovery method using weights determined from $\widehat{P}_i$ close to the recovery using weights determined from $P_i$. Note that being too aggressive and using a fixed weight of $w=0.2$ eventually results in performance that falls below that of standard $\ell_1$-minimization. For reference, in Figure \ref{fig::foreman DCT FINAL} we also display selected true frames from the Foreman sequence, particularly where we see locally extreme PSNR values. \edit{The drastic improvement in PSNR between frames 177 and 208 for all methods is interesting and perhaps counterintuitive for the weighted schemes. The video sequence transitions from a fairly steady scene to panning across a scene of the sky during these frames, and hence one might conjecture that the prior support estimates would no longer be accurate and the performance would degrade. The improvement, however, is likely because the homogenous sky frames are restricted to the DCT coefficients which have already been identified in the support estimates. The sorted DCT coefficient magnitudes also tend to decrease more rapidly for these frames, explaining why the improved performance is seen for the un-weighted recovery as well. The performance then decreases once a new textured scene is in view.} 

This experiment illustrates a practical unsupervised situation where weighted $\ell_1$-minimization can be used to improve signal recovery while eliminating the need for the practitioner to explicitly choose which weight to use. Here, we have presented an option that determines non-uniform weights on the fly and outperforms simple implementations of weighted $\ell_1$-minimization with a single, constant weight.

\begin{figure}[!htbp]
\centering
\includegraphics[height=3.6in,width=6.1in]{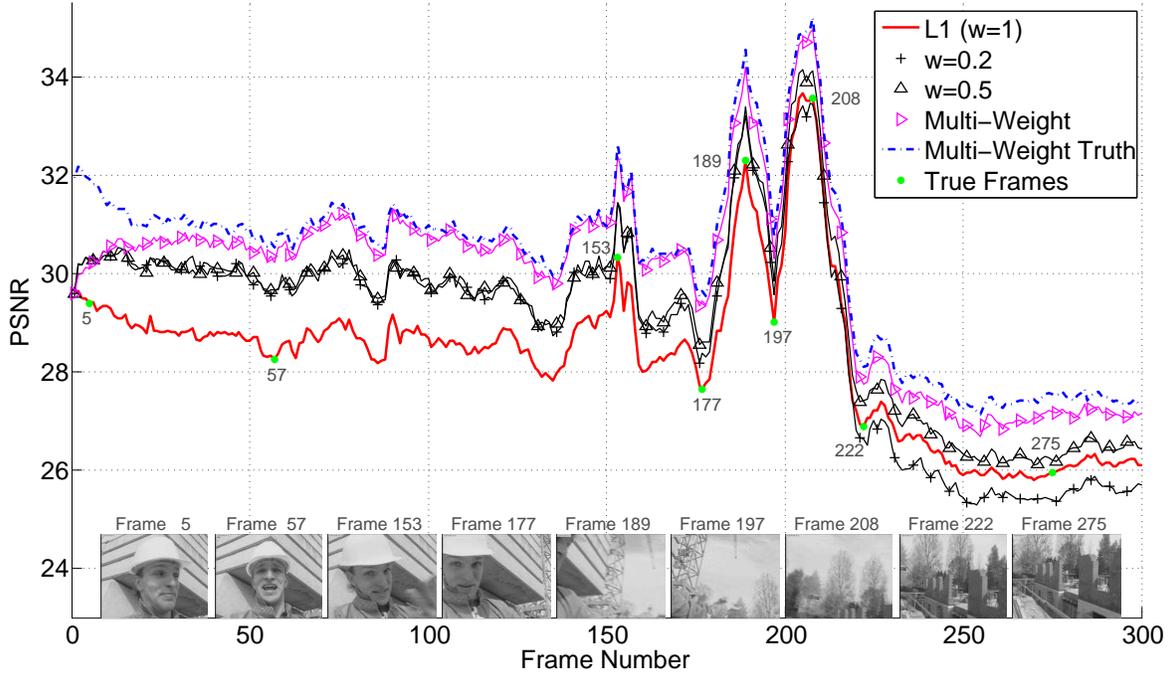} 
\caption{Recovery of frames 1 to 300 of the Foreman sequence at QCIF resolution and processed as four separate blocks of size $72 \times 88$. The sparsity basis is the DCT, the sensing matrix is the restriction operator, and $m = 3168$ measurements are taken. Recovery with standard $\ell_1$-minimization is shown in the solid red curve; recovery with weighted $\ell_1$-minimization with a constant weight in shown in the black curves for $w=0.2$ and $w=0.5$ (marker as indicated); recovery with weighted-$\ell_1$ minimization with non-uniform weights and the estimated $\widehat{P}_i$ is shown in the magenta curve (Multi-Weight, marker as indicated); recovery with weighted-$\ell_1$ minimization with non-uniform weights with the oracle-type $P_i$ is shown in the blue dashed curve (Multi-Weight Truth). Selected true frames are also displayed, with markers indicating where these frames correspond to the standard $\ell_1$-minimization PSNR.}
\label{fig::foreman DCT FINAL}
\end{figure}

\section{Discussion}\label{sec::discussion}
We have generalized the recovery conditions, in terms of the restricted isometry constant of the sensing matrix, of weighted $\ell_1$-minimization for sparse recovery when multiple distinct weights are permitted and arbitrary prior information can be utilized. Our analysis provides both an extension to existing literature that studies weighted $\ell_1$-minimization with a single weight, and an improvement to prior results on weighted $\ell_1$-minimization when multiple distinct weights are allowed. Additionally, we have included simulations that illustrate the theoretical results, and provided examples with synthetic signals and real video data where utilizing many distinct weights is superior to using a single fixed weight. \edit{An interesting extension of this work would be to derive sample complexity bounds for the Gaussian measurement case using a Gaussian width argument similar to that in \cite{SaabM_weighted}.}

\clearpage
\section*{Acknowledgment}
The authors would like to thank Hassan Mansour for sharing his code for the video application and also the reviewers for their thoughtful suggestions which significantly improved the manuscript. {The work of D. Needell and T. Woolf was partially supported by NSF Career DMS-1348721 and the Alfred P. Sloan Foundation. The work of R. Saab was partially supported by a Hellman
Fellowship, and the NSF under grant DMS-1517204.}

\frenchspacing
\bibliographystyle{plain}
\bibliography{bib}

\begin{thebibliography}{10}

\bibitem{AnderssonS_unifRecovery}
J.~Andersson and J.~O. Str\"{o}mberg.
\newblock On the theorem of uniform recovery of random sampling matrices.
\newblock {\em IEEE Trans. Inform. Theory}, 60(3):1700--1710, 2014.

\bibitem{BahW_weighted15}
B.~Bah and R.~Ward.
\newblock The sample complexity of weighted sparse approximation.
\newblock {\em IEEE Trans. Signal Processing}, 64(12), 2015.

\bibitem{BaranCDH_Model}
R.~Baraniuk, V.~Cevher, M.~Duarte, and C.~Hegde.
\newblock Model-based compressive sensing.
\newblock {\em IEEE Trans. Inform. Theory}, 56(4):1982--2001, 2010.

\bibitem{BlumeD_Iterative}
T.~Blumensath and M.~Davies.
\newblock Iterative hard thresholding for compressive sensing.
\newblock {\em Appl. Comput. Harmon. Anal.}, 27(3):265--274, 2009.

\bibitem{VonBorriesMP_priorInfo07}
R.~Von Borries, C.~Jacques Miosso, and C.~Potes.
\newblock Compressed sensing using prior information.
\newblock {\em 2nd IEEE International Workshop on Computational Advances in
  Multi-Sensor Adaptive Processing (CAMPSAP)}, 2007.

\bibitem{BoydV_Convex}
S.~Boyd and L.~Vanderberghe.
\newblock {\em Convex Optimization}.
\newblock Cambridge Univ. Press, Cambridge, England, 2004.

\bibitem{CaiWX_New}
T.~Cai, L.~Wang, and G.~Xu.
\newblock New bounds for restricted isometry constants.
\newblock {\em IEEE Trans. Inform. Theory}, 56(9):4388--4394, 2010.

\bibitem{CaiWX_Shifting}
T.~T. Cai, L.~Wang, and G.~Xu.
\newblock Shifting inequality and recovery of sparse signals.
\newblock {\em IEEE Trans. Signal Processing}, 58(3):1300--1308, 2010.

\bibitem{Cande_Restricted}
E.~Cand\`{e}s.
\newblock The restricted isometry property and its implications for compressed
  sensing.
\newblock {\em Comptes rendus de l'Acad\'{e}mie des Sciences, S\'{e}rie I},
  346(9-10):589--592, 2008.

\bibitem{CandeRT_Robust}
E.~Cand\`{e}s, J.~Romberg, and T.~Tao.
\newblock Robust uncertainty principles: {E}xact signal reconstruction from
  highly incomplete frequency information.
\newblock {\em IEEE Trans. Inform. Theory}, 52(2):489--509, 2006.

\bibitem{CandeRT_Stable}
E.~Cand\`{e}s, J.~Romberg, and T.~Tao.
\newblock Stable signal recovery from incomplete and inaccurate measurements.
\newblock {\em Comm. Pure Appl. Math.}, 59(8):1207--1223, 2006.

\bibitem{CandeT_Near}
E.~Cand{\`e}s and T.~Tao.
\newblock Near-optimal signal recovery from random projections: {U}niversal
  encoding strategies?
\newblock {\em IEEE Trans. Inform. Theory}, 52(12):5406--5425, 2006.

\bibitem{CandeWB_Enhancing}
E.~Cand\`{e}s, M.~Wakin, and S.~Boyd.
\newblock Enhancing sparsity by weighted $\ell_1$ minimization.
\newblock {\em J. Fourier Anal. Appl.}, 14(5-6):877--905, 2008.

\bibitem{CarilloB_reweightedLS09}
R.~E. Carrillo and K.~E. Barner.
\newblock Iteratively re-weighted least squares for sparse signal
  reconstruction from noisy measurements.
\newblock {\em Proc. IEEE Conf. Inform. Science and Systems (CISS)}, pages
  448--453, 2009.

\bibitem{Carillo_partiallyKnown}
R.~E. Carrillo, L.~F. Polania, and K.~E. Barner.
\newblock Iterative algorithms for compressed sensing with partially known
  support.
\newblock {\em Proc. IEEE Int. Conf. Acoust., Speech, and Signal Processing
  (ICASSP)}, pages 3654--3657, 2010.

\bibitem{Carillo_partiallyKnownIHT}
R.~E. Carrillo, L.~F. Polania, and K.~E. Barner.
\newblock Iterative hard thresholding for compressed sensing with partially
  known support.
\newblock {\em Proc. IEEE Int. Conf. Acoust., Speech, and Signal Processing
  (ICASSP)}, pages 4028--4031, 2011.

\bibitem{ChenTL_PICCS08}
G.~H. Chen, J.~Tang, and S.~Leng.
\newblock Prior image constrained compressed sensing (piccs): a method to
  accurately reconstruct dynamic ct images from highly undersampled projection
  data sets.
\newblock {\em Medical physics}, 35(2):660--663, 2008.

\bibitem{CrousNB_Wavelet}
M.~Crouse, R.~Nowak, and R.~Baraniuk.
\newblock Wavelet-based statistical signal processing using {H}idden {M}arkov
  {M}odels.
\newblock {\em IEEE Trans. Signal Processing}, 46(4):886--902, 1998.

\bibitem{DiazJRM_prior16}
M.~D\'{i}az, M.~Junca, F.~Rinc\'{o}n, , and M.~Velasco.
\newblock Compressed sensing of data with a known distribution.
\newblock {\em arXiv preprint arXiv:1603.05533}, 2016.

\bibitem{Donoh_Compressed}
D.~Donoho.
\newblock Compressed sensing.
\newblock {\em IEEE Trans. Inform. Theory}, 52(4):1289--1306, 2006.

\bibitem{DuartWB_Fast}
M.~Duarte, M.~Wakin, and R.~Baraniuk.
\newblock Fast reconstruction of piecewise smooth signals from random
  projections.
\newblock In {\em SPARS'05}, Rennes, France, November 2005.

\bibitem{Foucart_note}
S.~Foucart.
\newblock A note on guaranteed sparse recovery via $\ell_1$-minimization.
\newblock {\em Appl. Comput. Harmon. Anal.}, 29(1):97--103, 2010.

\bibitem{FoucartL_Sparsest}
S.~Foucart and M.J. Lai.
\newblock Sparsest solutions of underdetermined linear systems via
  $\ell_q$-minimization for $0< q\leq1$.
\newblock {\em Appl. Comput. Harmon. Anal.}, 26(3):395--407, 2009.

\bibitem{Saab_weightedL1RIP}
M.~Friedlander, H.~Mansour, R.~Saab, and \"{O}. Yilmaz.
\newblock Recovering compressively sampled signals using partial support
  information.
\newblock {\em IEEE Trans. Inform. Theory}, 58(2):1122--1134, 2012.

\bibitem{JacquLBB_Robust}
L.~Jacques, J.~Laska, P.~Boufounos, and R.~Baraniuk.
\newblock {Robust 1-bit compressive sensing via binary stable embeddings of
  sparse vectors}.
\newblock {\em IEEE Trans. Inform. Theory}, 59(4):2082--2102, 2013.

\bibitem{KhajXAH_weightedl1_prior}
M.~A. Khajehnejad, W.~Xu, A.~S. Avestimehr, and B.~Hassibi.
\newblock Weighted $\ell_1$ minimization for sparse recovery with prior
  information.
\newblock {\em Proc. IEEE Int. Symp. Inform. Theory (ISIT)}, pages 483--487,
  2009.

\bibitem{KhajXAH_weighted11}
M.~A. Khajehnejad, W.~Xu, A.~S. Avestimehr, and B.~Hassibi.
\newblock Analyzing weighted minimization for sparse recovery with nonuniform
  sparse models.
\newblock {\em IEEE Trans. Signal Processing}, 59(5):1985--2001, 2011.

\bibitem{KrishOH_Simpler12}
A.~K. Krishnaswamy, S.~Oymak, and B.~Hassibi.
\newblock A simpler approach to weighted $\ell_1$ minimization.
\newblock {\em Proc. IEEE Int. Conf. Acoust., Speech, and Signal Processing
  (ICASSP)}, pages 3621--3624, 2012.

\bibitem{LiangY_dynamic10}
D.~Liang and L.~Ying.
\newblock Compressed-sensing dynamic mr imaging with partially known support.
\newblock {\em Proc. IEEE Int. Conf. Eng. in Med. and Bio. Society (EMBC)},
  pages 2829--2832, 2010.

\bibitem{SaabM_weighted}
H.~Mansour and R.~Saab.
\newblock Recovery analysis for weighted $\ell_1$-minimization using a null
  space property.
\newblock {\em Appl. Comput. Harmon. Anal.}, 2015.

\bibitem{MansourY_MultiWeightedL1}
H.~Mansour and O.~Yilmaz.
\newblock Weighted-$\ell_1$ minimization with multiple weighting sets.
\newblock {\em Proc. SPIE, Wavelets and Sparsity XIV}, pages 813809--813809,
  2011.

\bibitem{MisraP_nonuniform15}
S.~Misra and P.~A. Parrilo.
\newblock Weighted $\ell_1$-minimization for generalized non-uniform sparse
  model.
\newblock {\em IEEE Trans. Inform. Theory}, 61(8):4424--4439, 2015.

\bibitem{MoL_New}
Q.~Mo and S.~Li.
\newblock New bounds on the restricted isometry constant $\delta_{2k}$.
\newblock {\em Appl. Comput. Harmon. Anal.}, 31(3):460--468, 2011.

\bibitem{MotaDR14}
J.~F.~C. Mota, N.~Deligiannis, and M.~R.~D. Rodrigues.
\newblock Compressed sensing with prior information: Optimal strategies,
  geometry, and bounds.
\newblock {\em arXiv preprint arXiv:1408.5250}, 2014.

\bibitem{NeedeT_CoSaMP}
D.~Needell and J.~Tropp.
\newblock {CoSaMP}: {I}terative signal recovery from incomplete and inaccurate
  samples.
\newblock {\em Appl. Comput. Harmon. Anal.}, 26(3):301--321, 2009.

\bibitem{NorthOneBit15}
P.~North and D.~Needell.
\newblock One-bit compressive sensing with partial support.
\newblock In {\em Proc. IEEE International Workshop on Computational Advances
  in Multi-sensor Adaptive Processing}, 2015.

\bibitem{RudelV_Sparse}
M.~Rudelson and R.~Vershynin.
\newblock On sparse reconstruction from {F}ourier and {G}aussian measurements.
\newblock {\em Comm. Pure Appl. Math.}, 61(8):1025--1171, 2008.

\bibitem{ScarlettED_priorInfo13}
J.~Scarlett, J.~S. Evans, and S.~Dey.
\newblock Compressed sensing with prior information: Information-theoretic
  limits and practical decoders.
\newblock {\em IEEE Trans. Signal Processing}, 61(2):427--439, 2013.

\bibitem{TroppG_Signal}
J.~Tropp and A.~Gilbert.
\newblock Signal recovery from partial information via orthogonal matching
  pursuit.
\newblock {\em IEEE Trans. Inform. Theory}, 53(12):4655--4666, 2007.

\bibitem{VaswaL_Modified}
N.~Vaswani and W.~Lu.
\newblock Modified-{CS}: Modifying compressive sensing for problems with
  partially known support.
\newblock In {\em Proc. IEEE Int. Symp. Inform. Theory (ISIT)}, Seoul, Korea,
  June 2009.

\bibitem{Eldar_MRI}
L.~Weizman, Y.~Eldar, and D.~Bashat.
\newblock Compressed sensing for longitudinal mri: An adaptive-weighted
  approach.
\newblock {\em Medical physics}, 49(9):5195--5208, 2015.

\bibitem{ZhouXWK_weighted13}
S.~Zhou, N.~Xiu, Y.~Wang, and L.~Kong.
\newblock Exact recovery for sparse signal via weighted $ \ell_1 $
  minimization.
\newblock {\em arXiv preprint arXiv:1312.2358}, 2013.

\end{thebibliography}

\end{document}